\theoremstyle{plain}
\newtheorem{problem}[theorem]{Problem}
\definecolor{gpl_green1}{rgb}{0.18, 0.545, 0.341}
\newcommand{\shift}{\mathrm{shift}}
\newcommand{\shifto}{\shift^{\circ}}
\newcommand{\untangling}{\textsc{Circular Untangling}\xspace}
\newcommand{\threepart}{\textsc{3-Partition}\xspace}
\newcommand{\distincticorev}{\textsc{Distinct Increasing Chunk Ordering with Reversals}\xspace}
\newcommand{\ico}{\textsc{Increasing Chunk Ordering}\xspace}
\newcommand{\icorev}{\ico { \sc with Reversals}\xspace}
\newcommand{\icorevS}{\textsc{ICOR}}
\newcommand{\distincticorevS}{\textsc{Dist-ICOR}\xspace}
\newcommand{\remove}[1]{{}}
\newcommand{\new}[1]{\textcolor{black}{#1}}
\title{Untangling Circular Drawings: Algorithms and Complexity}
\keywords{Graph drawing, straight-line drawing, planarity, moving vertices, untangling} 
\author{Sujoy Bhore}{Indian Institute of Science Education and Research, Bhopal, India.}{sujoy.bhore@gmail.com}{0000-0003-0104-1659}{}
\author{Guangping Li}{Algorithms and Complexity Group, TU Wien, Vienna, Austria}{guangping@ac.tuwien.ac.at}{0000-0002-7966-076X}{}
\author{Martin N\"ollenburg}{Algorithms and Complexity Group, TU Wien, Vienna, Austria}{noellenburg@ac.tuwien.ac.at}{0000-0003-0454-3937}{}
\author{Ignaz Rutter}{University of Passau, Passau, Germany}{Ignaz.Rutter@uni-passau.de}{0000-0002-3794-4406}{}
\author{Hsiang-Yun Wu}{Research Unit of Computer Graphics, TU Wien, Vienna, Austria \\St.~P{\"o}lten University of Applied Sciences, St.~P{\"o}lten, Austria}{hsiang.yun.wu@acm.org}{0000-0003-1028-0010}{}
\authorrunning{S.~Bhore, G.~Li, M.~N\"ollenburg, I.~Rutter, and H.-Y.~Wu}
\keywords{graph drawing, outerplanarity, \NP-hardness, untangling, permutations and combinations} 
\begin{document}
\nolinenumbers
\maketitle


\begin{abstract}
We consider the problem of untangling a given (non-planar) straight-line circular drawing~$\delta_G$ of an outerplanar graph $G=(V,E)$ into a planar straight-line circular drawing by shifting a minimum number of vertices to a new position on the circle. For an outerplanar graph $G$, it is clear that such a crossing-free circular drawing always exists and we define the \emph{circular shifting number} $\shifto(\delta_G)$ as the minimum number of vertices that are required to be shifted in order to resolve all crossings of $\delta_G$. We show that the problem \untangling, asking whether $\shifto(\delta_G) \le K$ for a given integer~$K$, is \NP-complete.
For $n$-vertex outerplanar graphs, we obtain a tight upper bound of $\shifto(\delta_G) \le n - \lfloor\sqrt{n-2}\rfloor -2$.
Moreover, we study the \untangling for \emph{almost-planar} circular drawings, in which a single edge is involved in all the crossings. For this problem, we provide a tight upper bound $\shifto(\delta_G) \le \lfloor \frac{n}{2} \rfloor-1 $ and present a constructive polynomial-time algorithm to compute the circular shifting number of almost-planar drawings.
\end{abstract}




\section{Introduction}
The family of outerplanar graphs, i.e., the graphs that admit a planar drawing where all vertices are incident to the outer face, is an important subclass of planar graphs and exhibits interesting properties in algorithm design, e.g., they have treewidth at most $2$. 
Being defined by the existence of a certain type of drawing, outerplanar graphs are a fundamental topic in the field of graph drawing and information visualization; 
they are relevant to circular graph drawing~\cite{tamassia2013handbook} and book embedding~\cite{DBLP:journals/jct/BernhartK79, BGMN20}.
Several aspects of outerplanar graphs have been studied over the years, e.g., characterization~\cite{Chartrand1967PlanarPG,SYSLO197947, DBLP:journals/siamdm/EllinghamMOT16}, recognition~\cite{10.1007/3-540-17218-1_57, DBLP:conf/approx/BabuKN16}, and drawing~\cite{DBLP:journals/corr/abs-2006-06951, DBLP:journals/tcs/LazardLL19,Six2006AFA}.
Moreover, outerplanar graphs and their drawings have been applied to various scientific fields, e.g., network routing~\cite{DBLP:journals/algorithmica/Frederickson96},  VLSI design~\cite{chung1987embedding}, and biological data modeling and visualization~\cite{krzywinski2009circos, wu:2019:sm}.

In this paper, we study the untangling problem for non-planar circular drawings of outerplanar graphs, i.e., we are interested in restoring the planarity property of a straight-line circular drawing with a minimum number of vertex shift. 
Similar untangling concepts have been used previously for eliminating edge crossings in non-planar drawings of planar graphs~\cite{Goaoc_2009}. 
More precisely, let $G=(V,E)$ be an $n$-vertex outerplanar graph and let $\delta_G$ be  an outerplanar drawing of $G$, which can be described combinatorially as the (cyclic) order 
$\sigma = (v_1, v_2, \ldots, v_n)$ of $V$ when traversing vertices on the boundary of the outer face counterclockwise.
This order $\sigma$ corresponds to a planar circular drawing by mapping each vertex $v_i \in V$ to the point $p_i$ on the unit circle $\mathcal{O}$ with polar coordinate 
$p_i = (1,\frac{i}{n} \cdot 2\pi)$ and drawing each edge $(v_i,v_j) \in E$ as the straight-line segment between its endpoints $p_i$ and $p_j$; see Figure~\ref{fig:morphing}. 

\begin{figure}[tb]
    \centering
	    \begin{subfigure}[t]{0.48\textwidth}
	        \centering
	            \includegraphics[page=1]{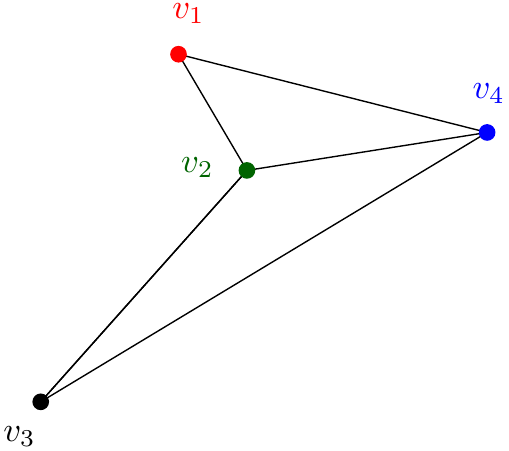}
	\caption{}
	\label{subfig:morphing_a}
	    \end{subfigure}
		\hfill
    \begin{subfigure}[t]{0.48\textwidth}
        \centering
    \includegraphics[page=2]{figs/morphing.pdf}
    \caption{}
        \label{subfig:morphing_b}
    \end{subfigure}%
        \caption{Morphing an outerplanar drawing (a) into a circular drawing (b).}
        \label{fig:morphing}
\end{figure}
We note that it is sufficient to consider circular drawings since any outerplanar drawing can be transformed into an equivalent circular drawing by morphing the boundary of the outer face to $\mathcal{O}$ and then redrawing the edges as straight segments.

Our untangling problem is further motivated by the problem of maintaining an outerplanar drawing of a \emph{dynamic} outerplanar graph, which is subject to edge or vertex insertions and deletions, while maximizing the visual \emph{stability} of the drawing~\cite{DBLP:journals/isci/LinLY11,mels-lam-95}, i.e., the number of vertices that can remain in their current position. Such problems of maintaining drawings with specific properties for dynamic graphs have been studied before~\cite{cohen1995dynamic, diehl2002graphs, beck2014state, BBCCI21}, but not for the outerplanarity property.

\subparagraph*{Related Work.}
The notion of untangling is often used in the literature for a crossing elimination procedure that makes a non-planar drawing of a planar graph crossing-free; 
see~\cite{Cibulka_2009,Kang_2011, polygon02, 10.1007/978-3-642-25870-1_27}.
Given a straight-line drawing $\delta_G$ of a planar graph $G$, the problem to decide whether it is possible to untangle $\delta_G$ 
by moving at most $K$ vertices, is known to be \NP-hard~\cite{Goaoc_2009,VERBITSKY2008294}. 
Lower bounds on the number of vertices that can remain fixed in an untangling process have also been studied~\cite{DBLP:journals/dcg/BoseDHLMW09,DBLP:conf/gd/CanoTU11,Goaoc_2009}. 
On the one hand, Bose et al.~\cite{DBLP:journals/dcg/BoseDHLMW09} proved that 
$\Omega(n^{1/4})$ vertices can remain fixed when untangling a drawing. 
on the other hand, Cano et al.~\cite{DBLP:conf/gd/CanoTU11} gave a family of drawings, where at most $O(n^{0.4948})$ vertices can remain fixed during an untangling process.
Goaoc et al.~\cite{Goaoc_2009} proposed an algorithm, which allows at least $\sqrt{(\log n-1)/\log\, \log\,n}$ vertices to remain fixed when untangling a drawing. 
\remove{If the graph is outerplanar, their algorithm can eliminate all edge crossings while keeping at least $\sqrt{n/2}$ vertices fixed.}
Given an arbitrary drawing of an $n$-vertex outerplanar graph, all edge crossings can be eliminated while keeping at least $\sqrt{n/2}$ vertices fixed~\cite{10.1007/978-3-642-25870-1_27, Goaoc_2009}, whereas there exists a drawing $\delta_G$ of an $n$-vertex outerplanar graph $G$ such that at most $\sqrt{n-1}+1$ vertices can stay fixed when untangling $\delta_G$~\cite{Goaoc_2009}.
Kraaijer et al.~\cite{KraaijerKMR19} proposed several variants on untangling moves such as swapping the locations of two adjacent vertices or rotating an edge over $90$ degrees.
They showed that it is \NP-complete to decide if a drawing can be untangled by swapping. 
They also proved that minimizing the number of swaps needed to untangle an embedded tree is \NP-complete.

Note that the untangled drawings in these previous works are planar but not necessary outerplanar.
In this paper, we study untanglings to obtain an outerplanar circular drawing from a non-outerplanar circular drawing by vertex moves.

\subparagraph*{Preliminaries and Problem Definition.}\label{sec:preliminaries}
Given a graph $G=(V, E)$, we say two vertices are \emph{2-connected} if they are connected by two internally vertex-disjoint paths. 
A 2-connected component of $G$ is a maximal set of pairwise 2-connected vertices. Two subsets $A,B \subseteq V$ are \emph{adjacent} if there is an edge $ab \in E$ with $a \in A$ and~$b \in B$.
A \textit{bridge} (resp.
\textit{cut-vertex}) of $G$ is an edge (resp. vertex) whose deletion increases the number of connected components of $G$.

A drawing of a graph is \emph{planar} if it has no crossings, it is \emph{almost-planar} if there is a single edge that is involved in all crossings, and it is \emph{outerplanar} if it is planar and all vertices are incident to the outer face.
A graph $G=(V, E)$ is \emph{outerplanar} if it admits an outerplanar drawing.
In addition, a drawing where the vertices lie on a circle and the edges are drawn as straight-line segments is called a \emph{circular drawing}.
Every outerplanar graph $G$ admits a planar circular drawing, as one can start with an arbitrary outerplanar drawing $\delta_G$ of $G$ and transform the outer face of $\delta_G$ to a circle~\cite{tamassia2013handbook}.  In this paper, we exclusively work with circular drawings of outerplanar graphs; we thus simply refer to them as drawings.

Given a non-planar circular drawing $\delta_G$ of an $n$-vertex outerplanar graph~$G$ where vertices lie on the unit circle $\mathcal{O}$, we can transform the drawing $\delta_G$ to a planar circular drawing by moving the vertices on the circle $\mathcal{O}$.
Formally, given a circular drawing $\delta_G$, a vertex move operation (or shift) changes the position of one vertex in $\delta_G$ to another position on the circle $\mathcal{O}$~\cite{Goaoc_2009}.
We call a sequence of moving operations that results in a planar circular drawing an \emph{untangling} of $\delta_{G}$. 
We say an untangling is \emph{minimum} if the number of vertex moves of this untangling is the minimum over all valid untanglings of~$\delta_G$. 
We define the \emph{circular shifting number} $\shifto(\delta_G)$ of a circular drawing $\delta_{G}$ as the number of vertex moves in a minimum untangling of~$\delta_G$.
In what follows, we define the problems formally.

\begin{problem}[\underline{\textsc{Circular Untangling (CU)}}] Given a circular drawing $\delta_G$ of an outerplanar graph $G$ and an integer $K$, decide if $\shifto( \delta_G) \le K$.
\end{problem}

\begin{problem}[\underline{\textsc{Minimum Circular Untangling (MinCU)}}] Given a circular drawing $\delta_G$ of an outerplanar graph $G$, find an untangling of $\delta_G$ with $\shifto( \delta_G)$ vertex moves.
\end{problem}

\subparagraph*{Contributions.} 
We show that {\sc Circular Untangling} is \NP-complete in Section~\ref{sec:hardness}.
Then, in Section~\ref{sec: general-bound}, we provide a tight upper bound of the circular shifting number.
Next, we consider almost-planar drawings.
In this case, we provide a tight upper bound on the circular shifting number in Section~\ref{sec:bound} and design a quadratic-time algorithm to compute a circular untangling with the minimum number of vertex moves in Section~\ref{sec: restricted untangling}.
\clearpage
\section{Complexity of Circular Untangling}
\label{sec:hardness}

In this section, we prove the following theorem.

\begin{theorem}
    \label{thm:untangling-is-hard}
    {\sc Circular Untangling} is \NP-complete.
\end{theorem}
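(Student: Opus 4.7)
The plan is to prove membership in \NP\ by a routine certificate argument and to prove hardness by polynomial-time reduction from the strongly \NP-hard problem \threepart. For membership, a nondeterministic machine guesses a set $S\subseteq V$ with $|S|\le K$ together with a target position on $\mathcal{O}$ for each vertex of $S$; one then verifies in polynomial time that the vertices of $V\setminus S$ retain their original cyclic order and that the resulting straight-line circular drawing is crossing-free, which amounts to checking that the resulting cyclic order is an outerplanar embedding of $G$.

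For hardness, given a \threepart\ instance consisting of positive integers $a_1,\dots,a_{3m}$ with $\sum_i a_i=mB$ and $B/4<a_i<B/2$, I would construct an outerplanar graph $G$ together with a deliberately tangled circular drawing $\delta_G$ built from two families of gadgets placed along a common ring. The construction contains $m$ \emph{slot gadgets}, one per target triple, each of capacity exactly $B$, and $3m$ \emph{item gadgets} whose sizes encode $a_1,\dots,a_{3m}$. In $\delta_G$ each item gadget is placed so that its edges cross the ring in a way that can only be resolved by relocating the whole gadget into some slot. The parameter $K$ will be tuned to the total number of vertices contained in item gadgets, so that a minimum untangling uses $K$ moves if and only if the item gadgets can be repacked into the slots so that the sizes in each slot sum to exactly $B$; the side condition $B/4<a_i<B/2$ then forces each slot to receive exactly three items. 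Since \threepart\ is strongly \NP-hard, the $a_i$ may be given in unary, so the size of $G$ is polynomial in the input.

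The main technical obstacle lies in the gadget design. Each item gadget must behave as a rigid unit: I would build it as a $2$-connected outerplanar block attached to the ring by a single cut vertex, so that by the essential uniqueness (up to reflection) of the outer boundary of a $2$-connected outerplanar graph, any partial relocation of its vertices either fails to remove a crossing or costs no less than moving the whole block. Dually, the ring and slot gadgets must form a ``skeleton'' whose vertices are cheaper to keep than to move; I would enforce this by equipping the skeleton with many internal chords, so that every skeleton vertex participates in a crossing pattern that cannot be cleaned up cheaply, and by making the skeleton itself already crossing-free under the intended assignment of items to slots. The hardest part will be to rule out ``cheating'' untanglings that dismantle items partially, mix fragments of several item gadgets into one slot, or shift skeleton vertices to create new admissible arrangements. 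I expect to handle this by a careful case analysis that charges any deviation from a clean block-level assignment to at least one extra shift, so that the shift count is minimised exactly by solutions of \threepart; once that accounting is in place, soundness and completeness of the reduction become straightforward.
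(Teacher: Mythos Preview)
Your membership argument is fine and matches the paper's. The hardness sketch, however, has a genuine gap at exactly the point you flag as ``the main technical obstacle,'' and I do not see how to close it along the lines you propose.

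The difficulty is that ``slot capacity'' has no natural meaning in the circular model. Vertex positions on $\mathcal{O}$ are real-valued, so any arc between two skeleton vertices can absorb arbitrarily many item vertices without running out of room; capacity must therefore be enforced purely by the edge structure of $G$. Your stated mechanism---building each item as a $2$-connected block attached to the ring by a single cut vertex---does not do this. If the cut vertex lies on the ring, then every planar circular drawing places the block next to that fixed ring vertex, so the item cannot be \emph{reassigned} to a different slot at all; the instance ceases to encode a packing choice. If instead the item is a separate connected component, then in any planar circular drawing it may be inserted into any gap of the skeleton without creating crossings, so every slot has unbounded capacity and the packing constraint disappears. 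Either way, the correspondence ``items fit into slots of size $B$ $\Leftrightarrow$ untangling with $K$ moves'' breaks down before the charging argument is ever reached.

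The paper avoids this obstacle by not attempting a direct bin-packing reduction. It passes through an intermediate problem, \textsc{Distinct Increasing Chunk Ordering with Reversals}: given chunks $C_1,\dots,C_\ell$ of distinct integers, decide whether some ordering and optional reversal of the chunks yields a concatenation with a strictly increasing subsequence of length $M$. This problem is tailored to circular untangling because ``number of fixed vertices'' is exactly ``length of a longest common cyclic subsequence between the given order and a planar order.'' The reduction to \textsc{Circular Untangling} then takes $G$ to be $\ell$ cycles glued at a single cut vertex $v_0$, with the initial clockwise order $v_0,v_1,\dots,v_L$; every planar circular drawing of such a $G$ is precisely a permutation of the cycles, each possibly reversed, so the two problems coincide. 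The hard combinatorial work---encoding \textsc{3-Partition} into the chunk problem---happens upstream, using carefully constructed chunks of concatenated incremental sequences to simulate capacities. Nothing in your slot/item sketch plays the role of that encoding, and I do not see how to supply it without essentially reinventing the paper's subsequence machinery.
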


The \NP-hardness follows by a reduction from the well-known \NP-complete problem \threepart~\cite{DBLP:books/fm/GareyJ79}. 
However, we do not give a direct reduction but rather work via an intermediate problem, called \distincticorev that concerns increasing subsequences.  
A \emph{chunk} $C$ is a sequence $C = (c_i)_{i=1}^n$ of positive integers.  For a chunk~$C$, we define~$C^1=C$, and we denote its reversal by $C^{-1}$.
We introduce the following problem.

\begin{problem}[\underline{\icorev(\icorevS)}] Given a set $\mathcal C = \{C_1,\ldots,C_\ell\}$ of $\ell$ \emph{chunks} and a positive integer $M$, the question is to determine whether a permutation~$\pi$ of~$\{1,\ldots,\ell\}$ and a function~$\varepsilon \colon \{1,\ldots,\ell\} \to \{-1,1\}$ exist such that the concatenation~$C_{\pi(1)}^{\varepsilon(1)} C_{\pi(2)}^{\varepsilon(2)},\ldots, C_{\pi(\ell)}^{\varepsilon(n)}$ contains a strictly increasing subsequence of length~$M$.
\end{problem}

This problem also comes in a \emph{distinct} variant, denoted \distincticorevS, where all integers in all input chunks are required to be 
distinct.  
We first show that \distincticorevS is \NP-complete and then reduce it to \untangling.
Since we feel that \distincticorevS may serve as a useful starting point for future reductions, we explicitly state our intermediate result.
\begin{theorem}
    \label{thm:disticor-is-hard}
    \distincticorev is \NP-complete.
\end{theorem}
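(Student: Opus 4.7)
Membership in \NP{} is routine: given a candidate permutation $\pi$ and sign function $\varepsilon$, assemble the concatenation $C_{\pi(1)}^{\varepsilon(1)} \cdots C_{\pi(\ell)}^{\varepsilon(\ell)}$ in polynomial time and compute its longest strictly increasing subsequence in quadratic time; accept iff its length is at least $M$.

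For \NP-hardness, the plan is to reduce from \threepart. Given numbers $a_1, \ldots, a_{3m}$ that sum to $mB$ with $B/4 < a_j < B/2$, I would construct an instance of \distincticorevS{} whose chunks come in two roles. First, $m+1$ \emph{separator} singleton chunks carry strictly increasing values $v_0 < v_1 < \cdots < v_m$ spaced by a large integer $N \gg B$; these values form a rigid ``backbone'' that any sufficiently long increasing subsequence must traverse in order, implicitly partitioning the concatenated word into $m$ consecutive ``bins''. Second, $3m$ \emph{item} chunks $C_1, \ldots, C_{3m}$, each a carefully designed sequence of distinct integers so that $C_j$ contributes at most $a_j$ elements to any increasing subsequence and attains this maximum only when placed between two consecutive separators in such a way that its contributing values fall into the corresponding bin interval $(v_{i-1}, v_i)$. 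The threshold would be $M := (m+1) + mB$.

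The forward direction then maps any valid $3$-partition $(T_1, \ldots, T_m)$ to a permutation placing the three item chunks of $T_i$ between $S_{i-1}$ and $S_i$, choosing reversals to align their contribution with $(v_{i-1}, v_i)$. The resulting increasing subsequence picks up $m+1$ separator elements plus $\sum_{j \in T_i} a_j = B$ item elements per bin, totalling exactly $M$. For the converse, I would argue that any length-$M$ increasing subsequence must use all separators (otherwise not enough room remains for the $mB$ item contributions spread over $m$ bins), that its restriction to items between $S_{i-1}$ and $S_i$ must saturate each chunk at $a_j$, and hence that the sizes of items assigned to each bin sum to exactly $B$; the constraint $B/4 < a_j < B/2$ then pins down exactly three items per bin, yielding a valid $3$-partition.

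The crux and main obstacle is the design of the item chunks so that all of (i) the per-chunk LIS is capped by $a_j$ in either orientation, (ii) saturation at $a_j$ occurs only when the chunk is placed between the separators of a single bin matching its current orientation, (iii) up to three items cohabiting a bin can simultaneously achieve their $a_j$-contributions without value collisions, and (iv) all integers across the whole instance are distinct. My plan is to equip each $C_j$ with a block structure of $m$ sorted blocks (one per candidate bin), arranged in reverse order of their value ranges so that at most one block can join an increasing subsequence, with reversal exposing the blocks in the opposite arrangement to broaden the alignment options; distinctness and clean stacking within a bin are then arranged by adding small per-chunk offsets to block values. Verifying that ``misplaced'' blocks cannot spuriously inflate the LIS, and that the overhead of traversing the separators is unavoidable, is the delicate combinatorial bookkeeping that the full proof must carry out.
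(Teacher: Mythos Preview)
Your plan—reducing from \threepart with one chunk per element whose LIS contribution is capped near $a_j$—is the same high-level route the paper takes, though the paper dispenses with separator chunks entirely (its ``bins'' are purely value intervals of width $K+3X$ for a large padding $X=3mK$, and each chunk is built so that no increasing subsequence inside it can cross an interval boundary).

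There is, however, a real gap in your backward direction. You assert that once every item chunk saturates at $a_j$, ``hence the sizes of items assigned to each bin sum to exactly $B$''; but saturation only yields $\sum_j a_j = mB$ globally, and nothing in your construction constrains the per-bin share. The obvious fix—give each bin a value window of width exactly $B$, so that at most $B$ distinct integers can appear there—is incompatible with your requirements (iii) and (iv): with \emph{one} block of $a_j$ consecutive values per chunk per bin, any fixed per-chunk offset either causes value collisions between chunks (violating distinctness) or, if made large enough to guarantee disjointness for all possible triples, inflates the bin width well past $B$ and destroys the counting. The paper resolves exactly this tension by packing into each chunk \emph{many} incremental sequences of length $a_i+X$ per bin—one for every admissible starting offset—concatenated in decreasing order of starting value; this lets any triple of chunks stack cleanly within a bin, keeps the per-chunk LIS cap at $a_i+X$ (and at most $X$ after reversal, so reversals never help), and achieves distinctness via a secondary lexicographic coordinate rather than offsets. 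The per-bin sum is then pinned down by a short counting-plus-divisibility argument after pre-scaling all $a_i$ to multiples of $3m$. Your one-block-per-bin design lacks an analogue of this many-starting-positions device, and without it the converse implication does not go through.
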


\subsection{Proof of Theorem~\ref{thm:disticor-is-hard}}
\label{subsec: reduction1}

Observe that \distincticorevS lies in \NP, since we can non-deterministically guess an ordering of chunks and whether each of them is reversed or not.  Then the existence of an increasing chain of a specific length can be checked in polynomial time.  The remainder of this section is devoted to showing \NP-hardness by giving a reduction from \threepart.

The input to the \threepart problem consists of a multiset~$A=\{a_1,\ldots,a_{3m}\}$ of $3m$ positive integers and a positive integer $K$ such that~$\frac{K}{4} < a_i < \frac{K}{2}$ for $i=1,\ldots,3m$. 
The question is whether~$A$ can be partitioned into $m$ disjoint triplets $T_1,\ldots,T_m$ such that~$\sum_{a \in T_j} a = K$ for all $j=1,\ldots,m$. It is well-known that \threepart is strongly \NP-complete, i.e., the problem is \NP-complete even if the integers in $A$ and~$K$ are polynomially bounded in~$m$~\cite{DBLP:books/fm/GareyJ79}.
 
Let~$I=(A,K)$ with~$A=\{a_1,\dots,a_{3m}\}$ be an instance of \threepart.
 We assume that each number in $A$ is a multiple of $3m$, 
otherwise, we can multiply each element in $A$ and $K$ by $3m$.
We now construct an equivalent instance $I'=(C,M)$ of \distincticorevS in polynomial time.
\subparagraph*{Construction.}
We create for each element~$a_i$ a corresponding chunk~$C_i$ as follows. 
For two integers $a, l$, we denote the consecutive integer sequence $(a, a+1, \ldots, a+l-1)$ as the  \emph{incremental sequence} of length $l$ starting at~$a$.
We say that a sequence of integers \emph{crosses} an integer $c$ if it contains both a number that is at most~$c$ and a number that is at least~$c+1$.
Let $X = 3mK$.  
We take all the incremental sequences of length~$a_i+X$ starting at $\alpha \cdot (K+3X) + \beta \cdot X +\gamma$ for $\alpha \in \{0,\cdots m-1\}, \beta \in \{0, 1, 2\}$ and $\gamma \in \{1,2,\cdots,K-a_i \}$. Note that there are at most~$X$ such sequences and no such sequence crosses a multiple of $K+3X$.
To construct the chunk  $C_i$, we first build a chunk $C_i'$ with possibly repeating numbers as follows.
The chunk $C_i'$ is formed by concatenating these incremental sequences in decreasing order of their starting number; see Figure~\ref{fig:chunk construction}.  Observe that, in the figure, a strictly increasing subsequence corresponds to an x-monotone chain of points with positive slopes, whereas a non-increasing subsequence corresponds to an x-monotone chain of points with non-positive slopes.
\begin{figure}[hptb]
	\centering
\includegraphics[ page=1]{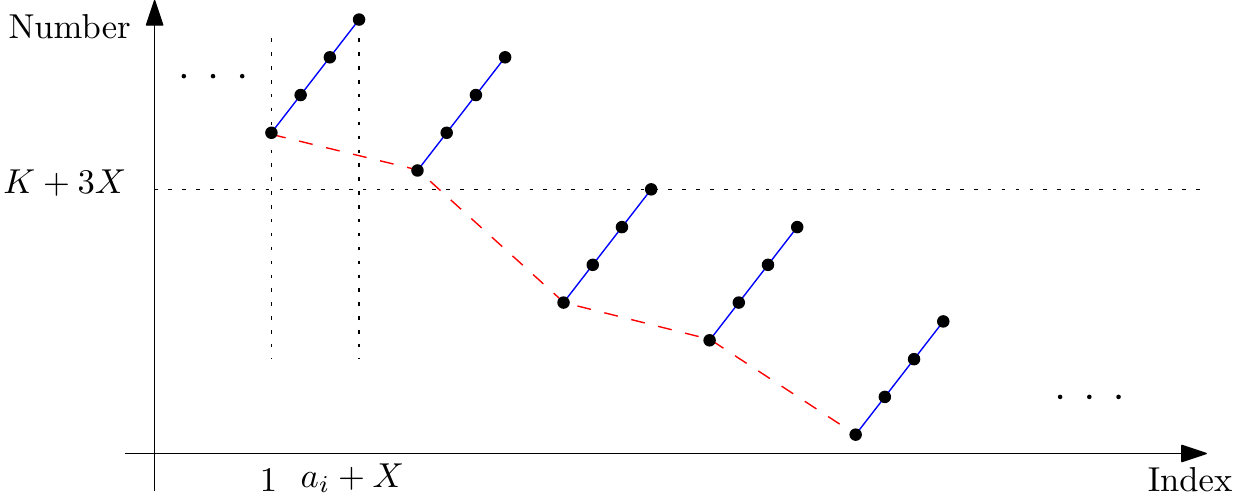}
\caption{Construction of chunk $C_i$ as a concatenation of incremental sequences of length $a_i + X$ in the decreasing order of their first number.
\new{Each blue solid path corresponds to an incremental sequence. 
The red dashed path, which connects the first numbers of incremental sequences, slopes downward.}
}    
\label{fig:chunk construction}
\end{figure}


To make the elements distinct, we introduce strings of numbers, called \emph{words}, which we order lexicographically.
We take the concatenation $C$ of chunks $C_1', C_2', \cdots, C_{3m}'$, then replace the number $a$ at the $i$-th position 
by the word $(a, |C|-i)$ for each position~$i$.  The chunks~$C_1,\dots,C_{3m}$ are obtained by cutting this modified sequence of words in such a way that~$|C_i| = |C_i'|$ for $i=1,\dots,3m$.
At the end of the construction, each word is replaced by its rank in a lexicographically increasing ordering of all words that occur in the instances.
We obtain an instance~$I'=(\mathcal{C},M)$ of~\icorev by setting~$\mathcal{C}=\{C_1,\dots,C_{3m}\}$ and $M := m(K+3X)$.
For simplicity, we use the construction with words in the following. 

For a sequence of words with two entries, we call the sequence obtained by keeping only the first entry of each word, its \emph{projection}.
Note that the projection of~$C_i$ is~$C_i'$.

\begin{lemma}
The chunks~$C_1,\dots,C_{3m}$ have the following properties.
 \begin{enumerate}[(i)]
 \item\label{prop:siss} For every strictly increasing subsequence of~$C_i$, its projection is a strictly increasing sequence.
  \item\label{prop:no-cross} No projection of a strictly increasing subsequence of~$C_i$ crosses a multiple of~$K+3X$.
 \item\label{prop:existence} For $\alpha \in \{0,\cdots m-1\}, \beta \in \{0, 1, 2\}$ and $\gamma \in \{1,2,\cdots,K-a_i \}$, there exists a subsequence of $C_i$ whose projection is the incremental sequence of length~$a_i+X$ starting at $\alpha \cdot (K+3X) + \beta \cdot X +\gamma$.
 \item\label{prop:maxI}Every strictly increasing subsequence of~$C_i$ has length at most~$a_i+X$.
 \item\label{prop:maxD}Every strictly increasing subsequence of~$C_i^{-1}$ has length at most~$X$.
 \end{enumerate}
 \end{lemma}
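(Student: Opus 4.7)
The plan is to verify the five properties using two ingredients: the decomposition of the value range into ``cells'' $(\alpha(K+3X),(\alpha+1)(K+3X))$ for $\alpha=0,\dots,m-1$, and the tie-breaking role of the second coordinate of each word. For (i), I would observe that in $C_i$ the second coordinate at local position $q$ equals $|C|-p_0-q+1$, where $p_0$ is the offset of $C_i$ within $C$; this is strictly decreasing in $q$, so two words with equal first coordinate cannot be in strict lex order when they appear in their natural position order. Hence every strictly increasing subsequence of $C_i$ has strictly increasing projection. Property (iii) is then immediate: the designated incremental sequence is exactly one of the concatenated blocks making up $C_i'$, so the corresponding positions of $C_i$ form a subsequence with the required projection.

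For (ii) I would carry out a small bookkeeping computation to check that every constructed incremental sequence is confined to a single cell. Its largest value $\alpha(K+3X)+\beta X+\gamma+a_i+X-1$ is bounded by $\alpha(K+3X)+K+3X-1<(\alpha+1)(K+3X)$, while the starting numbers in cell $\alpha+1$ all exceed those in cell $\alpha$. Combined with the concatenation order of $C_i'$ (decreasing starting number), this forces every sequence of a higher cell to precede every sequence of a lower cell in $C_i'$. A projection that crossed a multiple of $K+3X$ would thus need a higher-cell value at a later position than an earlier lower-cell value, contradicting (i).

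The main obstacle, and the step I expect to require the most care, is (iv). By (i) and (ii) the projection of the subsequence is strictly increasing and lives inside one cell $\alpha$. Let $s_1,\dots,s_k$ be the incremental sequences (listed in the order they occur in $C_i'$) from which elements are picked, with starting numbers $b_1>b_2>\cdots>b_k$ (strict, since within a cell the concatenation is in strictly decreasing order of starting numbers). The smallest picked value is at least $b_1$, the largest at most $b_k+a_i+X-1$, and all picked values are distinct integers in this window, so the length is at most $b_k-b_1+a_i+X$. For $k=1$ this is exactly $a_i+X$; for $k\ge 2$ the inequality $b_k-b_1\le -1$ yields a strictly smaller bound, so the universal upper bound $a_i+X$ follows.

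Finally, for (v), I would invoke the fact that within $C_i^{-1}$ the second coordinates are strictly \emph{increasing} with position, so a strictly increasing subsequence of $C_i^{-1}$ projects to a non-decreasing sequence, equivalently a non-increasing subsequence when $C_i'$ is read forward. Since each incremental sequence of $C_i'$ is itself strictly increasing, it contributes at most one element, so the length is bounded by the total number of incremental sequences, $3m(K-a_i)\le 3m(K-1)<3mK=X$.
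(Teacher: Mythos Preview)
Your proof is correct and largely parallels the paper's, which dispatches (i)--(iii) and (v) with essentially the same observations (the paper is terser, merely saying that (ii) and (iii) ``follow directly from the construction'').

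The one genuine difference is the argument for~(iv). The paper uses a positional pigeonhole: for each index $j\in\{1,\dots,a_i+X\}$, the $j$-th entries of the incremental blocks appear in $C_i'$ in strictly decreasing order (since the blocks are concatenated by decreasing starting number), so a strictly increasing subsequence of $C_i'$ can contain the $j$-th entry of at most one block, yielding the bound $a_i+X$ directly. Your argument instead bounds the length by the cardinality of the integer window $[b_1,\,b_k+a_i+X-1]$ containing all picked values, which is at most $a_i+X$ since $b_k\le b_1$. Both are clean one-step arguments; the paper's makes no appeal to~(ii) (and in fact your window argument does not really need it either, since the starting numbers are \emph{globally} strictly decreasing, not just within a cell), while yours makes it transparent that the bound is attained only when all picks come from a single block.
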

 \begin{proof}
Since the sequence obtained by keeping the second entry of each word of $C_i$ is  strictly decreasing, we get Property~(\ref{prop:siss}). 
Property~(\ref{prop:no-cross}) and Property~(\ref{prop:existence}) follow directly from the construction of $C_i'$.  

To see the Property~(\ref{prop:maxI}), consider an arbitrary strictly increasing subsequence $s$ of $C_i'$.
Recall that $C_i'$  is the concatenation of incremental sequences of length $a_i +X$ in decreasing order of their starting number.
 Given an index $j \in \{1,\cdots, a_i+X\}$, we claim that $s$ contains the $j$-th element of at most one incremental subsequence of $C_i'$.
 Otherwise, suppose there are two incremental subsequences of $C_i'$ whose $j$-th element is in $s$, then these two numbers are ordered decreasingly, contradicting the assumption that $s$ is strictly increasing. 
 Consequently, the length of strictly increasing subsequence of $C_i'$ is bounded by $a_i+X$.
 
 For Property~(\ref{prop:maxD}), consider a strictly increasing subsequence of $C_i^{-1}$.  It corresponds to a strictly decreasing subsequence $s$ of $C_i$, and its projection is a non-increasing subsequence~$s'$ of $C_i'$.
Note that $s'$ contains at most one element of each incremental sequence of $C_i'$, and $C_i'$ is the concatenation of at most $X$ incremental sequences.  Therefore, the length of $s'$ is at most~$X$.
 \end{proof}






\begin{lemma}
\label{lem:reduction1}
  $I'$ is a yes-instance of~\distincticorevS if and only if~$I$ is a yes-instance of~\threepart.
\end{lemma}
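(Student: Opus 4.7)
The plan is to prove the two directions of the equivalence separately, leveraging the five properties of the chunks established just above the lemma.

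For the forward direction, I would assume a valid $3$-partition $T_1,\ldots,T_m$ with $\sum_{a\in T_j} a = K$ is given and construct an explicit witness for \distincticorevS. I would place the three chunks corresponding to the elements of each triplet $T_j$ consecutively (in some fixed order $a_{i_1^j}, a_{i_2^j}, a_{i_3^j}$) within the global ordering $\pi$, and set $\varepsilon(i)=1$ for all $i$. Inside the $j$-th block of chunks I would invoke Property~(iii) three times to pick incremental subsequences that chain together: one with $\alpha=j-1,\beta=0,\gamma=1$ inside $C_{i_1^j}$, one with $\alpha=j-1,\beta=1,\gamma=a_{i_1^j}+1$ inside $C_{i_2^j}$, and one with $\alpha=j-1,\beta=2,\gamma=a_{i_1^j}+a_{i_2^j}+1$ inside $C_{i_3^j}$. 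The constraints $\gamma\le K-a_{i_k^j}$ needed to invoke Property~(iii) follow from $a_{i_1^j}+a_{i_2^j}+a_{i_3^j}=K$ combined with each $a_i\ge 1$. These three subsequences are consecutive integer ranges summing to a strictly increasing block of length $K+3X$, and the blocks for different triplets $j$ live in disjoint, increasing intervals of values, so their concatenation is strictly increasing of total length $m(K+3X)=M$.

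For the reverse direction, I would start from a witness $(\pi,\varepsilon)$ giving an increasing subsequence of length at least $M$ in the concatenation. Let $S=\{i:\varepsilon(i)=-1\}$. Combining Properties~(iv) and~(v), the total contribution is bounded by $\sum_{i\notin S}(a_i+X) + \sum_{i\in S}X = M - \sum_{i\in S}a_i$, which forces $S=\varnothing$ and, moreover, forces every chunk to contribute exactly $a_i+X$ to the increasing subsequence. By Property~(i), the projection of each such contribution is itself a strictly increasing sequence of length $a_i+X$ inside $C_i'$; by Property~(ii), this projection lies entirely within a single ``big block'' $B_{\alpha_i}$ of the form $[\alpha_i(K+3X), (\alpha_i+1)(K+3X)]$. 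Because the global subsequence is strictly increasing, the assignment $k\mapsto\alpha_{\pi(k)}$ must be non-decreasing, so each block receives the full contribution of a consecutive group of chunks in $\pi$.

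The crux then is a counting argument: the total contribution from chunks assigned to block $\alpha$ is $\sum_{i:\alpha_i=\alpha}(a_i+X)$, and since each $B_\alpha$ admits strictly increasing sequences only up to a common cap of $K+3X$ values, summing over all $m$ blocks and using $M=m(K+3X)$ forces equality in every block, i.e.\ $\sum_{i:\alpha_i=\alpha}(a_i+X)=K+3X$, equivalently $\sum_{i:\alpha_i=\alpha}a_i=K$ once we cancel $|\{i:\alpha_i=\alpha\}|\cdot X$. The bounds $K/4<a_i<K/2$ then force each block to contain exactly three chunks, and reading off the groups $\{a_i:\alpha_i=\alpha\}$ for $\alpha=0,\ldots,m-1$ yields the desired $3$-partition of $A$.

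The main obstacle I expect is the pigeonhole step that extracts the block-by-block structure: I need to argue that contributions from different chunks assigned to the same block $B_\alpha$ cannot ``interleave'' in any way that exceeds the $K+3X$ capacity, which requires combining the non-crossing Property~(ii) with the monotone ordering of chunks into blocks and carefully checking that the only way equality holds everywhere is when each block's chunks partition its value range into $a_{i_k}+X$ consecutive incremental runs. Once this structural claim is pinned down, the counting collapse to $\sum a_i = K$ per block is routine.
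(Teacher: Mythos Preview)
Your forward direction is essentially the paper's argument. In the reverse direction you have the right skeleton---no reversals, each chunk contributes exactly $a_i+X$, and by Property~(ii) each contribution lies in a single value block $B_\alpha$---but the step you yourself flag as the ``main obstacle'' is a genuine gap, and your sketch does not close it.

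The assertion that ``each $B_\alpha$ admits strictly increasing sequences only up to a common cap of $K+3X$ values'' is false as stated. The global subsequence is strictly increasing in the two-entry \emph{words}, but its projection is only non-decreasing: Property~(i) guarantees strict increase of the projection \emph{within} each chunk's contribution, yet at a boundary between two chunks assigned to the same block the projection value may repeat. Hence a block receiving $n$ chunks can carry a strictly increasing word-sequence of length up to $K+3X+(n-1)$, so your per-block cap of $K+3X$ does not hold, and the pigeonhole collapse you describe does not go through. The paper resolves this with an ingredient you do not mention: the preprocessing that multiplies every $a_i$ by $3m$. One first bounds the length of each partition cell by $|S_i|<K+3X+3m$ (at most $3m$ chunk boundaries), and then uses that $K$, $X$, and every $|S_i|$ are multiples of $3m$ to conclude $|S_i|\le K+3X$; summing over the $m$ cells to $M=m(K+3X)$ then forces equality in each. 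Without this divisibility trick there is no way to rule out small overshoots that would spoil the exact count.

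A secondary point: your final cancellation ``$\sum(a_i+X)=K+3X$, equivalently $\sum a_i=K$ once we cancel $|\{i:\alpha_i=\alpha\}|\cdot X$'' presupposes three chunks per block before you derive it. This is repairable (the largeness of $X=3mK$ together with $\sum_\alpha c_\alpha=3m$ forces each $c_\alpha=3$), but as written the logic is circular.
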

\begin{proof}

Assume there is a partition of the elements of~$A$ into $m$ triplets, each of which sums to~$K$.
We arbitrarily order these triples, and within each triplet, we order the elements according to their index. 
This defines a total ordering on the elements, and therefore on the chunks.    
Let $T_i = \{a_x,a_y,a_z\}$ with~$x < y < z$ be the $i$th triplet and let~$C_x,C_y,C_z$ be the corresponding chunks.  
By Property~(\ref{prop:existence}), $C_x$, $C_y$, and $C_z$ contain, respectively, three subsequences whose projections are the incremental sequences of  length~$a_x + X$, $a_y+ X$, and $a_z+ X$ starting at~$(i-1)(K+3X)+1, (i-1)(K+3X)+X + a_x+1$, and~$(i-1)(K+3X)+2X + a_x+a_y+1$.
Concatenating these subsequences for all chunks hence gives an increasing subsequence whose projection the sequence $1,\cdots,m(K+3X)$.
    
Conversely, assume that there is a chunk ordering that contains a strictly increasing subsequence $S$ of length  $m(K+3X)$. 
By Property~(\ref{prop:maxI}) and Property~(\ref{prop:maxD}), each chunk~$C_i$ or its reversal can contribute a subsequence of at most~$a_i+X$ elements, therefore each chunk~$C_i$ or its reversal must contribute an increasing subsequence of length~$a_i+X$.
Moreover, reversing $C_i$ only provides a shorter increasing subsequence than $a_i+X$, thus no $C_i$ is reversed.
We cut the sequence $S$ into $m$ consecutive sequences $S_1, S_2 \ldots S_m$, called \emph{partition cells} of $S$,  such that the projection of $S_i$ consists of numbers in $\{(i-1)(K+3X)+1, \cdots, i(K+3X)\}$.
By Property~(\ref{prop:no-cross}), the projection of every strictly increasing subsequence inside a chunk does not cross a multiple of~$K+3X$, thus each chunk contributes to exactly one partition cell.  We claim the following:

\begin{claim}
Each partition cell has length $K+3X$.
\end{claim}

We first show how the proof of the lemma can be derived from the claim.  Since the length of each cell is $K+3X$, exactly three chunks contribute to each cell.  Each such triplet of chunks then corresponds to a triplet of $A$ whose sum is $K$.  Together, these triplets define a solution of the instance~$I$ of \threepart.

It remains to prove the claim.  Consider a partition cell $S_i$ consisting of numbers from $n$ chunks.  Then~$S_i$ is the concatenation of subsequences~$S_{i,1},S_{i,2},\dots,S_{i,n}$, $n \le 3m$, each of which is contributed by a different chunk. Since the projection of $S_i$ is a non-decreasing sequence consisting of numbers in $\{(i-1)(K+3X)+1, \cdots, i(K+3X)\}$ and by Property~(\ref{prop:siss}), the projection of each $S_{i,j}$ is a strictly increasing sequence, it follows that non-strict increases of~$S_i$ can only occur when moving from~$S_{i,j}$ to~$S_{i,j+1}$ for some~$j$.
Thus, $|S_i| < K + 3X + n \leq K + 3X + 3m$.  

Note that~$X,K$ and~$|S_i|$ are all multiples of $3m$.  For $X$, this is by definition, for $K$, it follows from the fact that each element of $A$ is a multiple of~$3m$, and for $|S_i|$ recall that each chunk~$C_j$ that contributes a nonempty subsequence of~$S_i$ contributes a sequence of length~$X+a_j$.  Therefore~$|S_i| < K+3X+3m$ implies~$|S_i| \le K+3X$.  Suppose there exists a partition cell $S_j$ with $|S_j|<K+3X$, then $|S| < m(K+3X)$, which contradicts our assumption of $|S| = m(K+3X)$.  Hence~$|S_i| = K+3X$ as claimed.
\end{proof}

\subsection{Proof of Theorem~\ref{thm:untangling-is-hard}}
\label{subsec: reduction2}

It is readily seen that \untangling lies in \NP.  So it remains to describe the reduction from \distincticorevS.
Let~$I=(\mathcal{C}, M)$ be an instance of \distincticorevS with chunks~$C_1,\ldots,C_\ell$.  
By replacing each number with its rank among all occurring numbers, we may assume without loss of generality, that the numbers in the sequence are~$1,\ldots,\sum_{i=1}^\ell |C_i|=:L$.

We construct an instance~$I' =(\delta_G, K)$ of \untangling as follows; see Figure~\ref{fig:reduction3_a}.
We create vertices~$v_1,\ldots,v_L$ and an additional vertex~$v_0$.  
For each chunk~$C_i$, we create a cycle~$K_i$ that starts at $v_0$, visits the vertices that correspond to the elements of $C_i$ in the given order, and then returns to~$v_0$.
That is, $G$ consists of $\ell$ cycles that are joined by the cut-vertex~$v_0$.
The drawing~$\delta_G$ is obtained by placing the vertices in the clockwise order~$\sigma_G = v_0,v_1,v_2,\ldots,v_L$ on~$\mathcal O$.  Finally, we set~$K := L-M$.  Clearly,~$I'$ can be constructed from~$I$ in polynomial time.  It remains to prove the following.

\begin{figure}[ht]
    \centering
	    \begin{subfigure}[t]{0.48\textwidth}
	        \centering
\includegraphics[page=3]{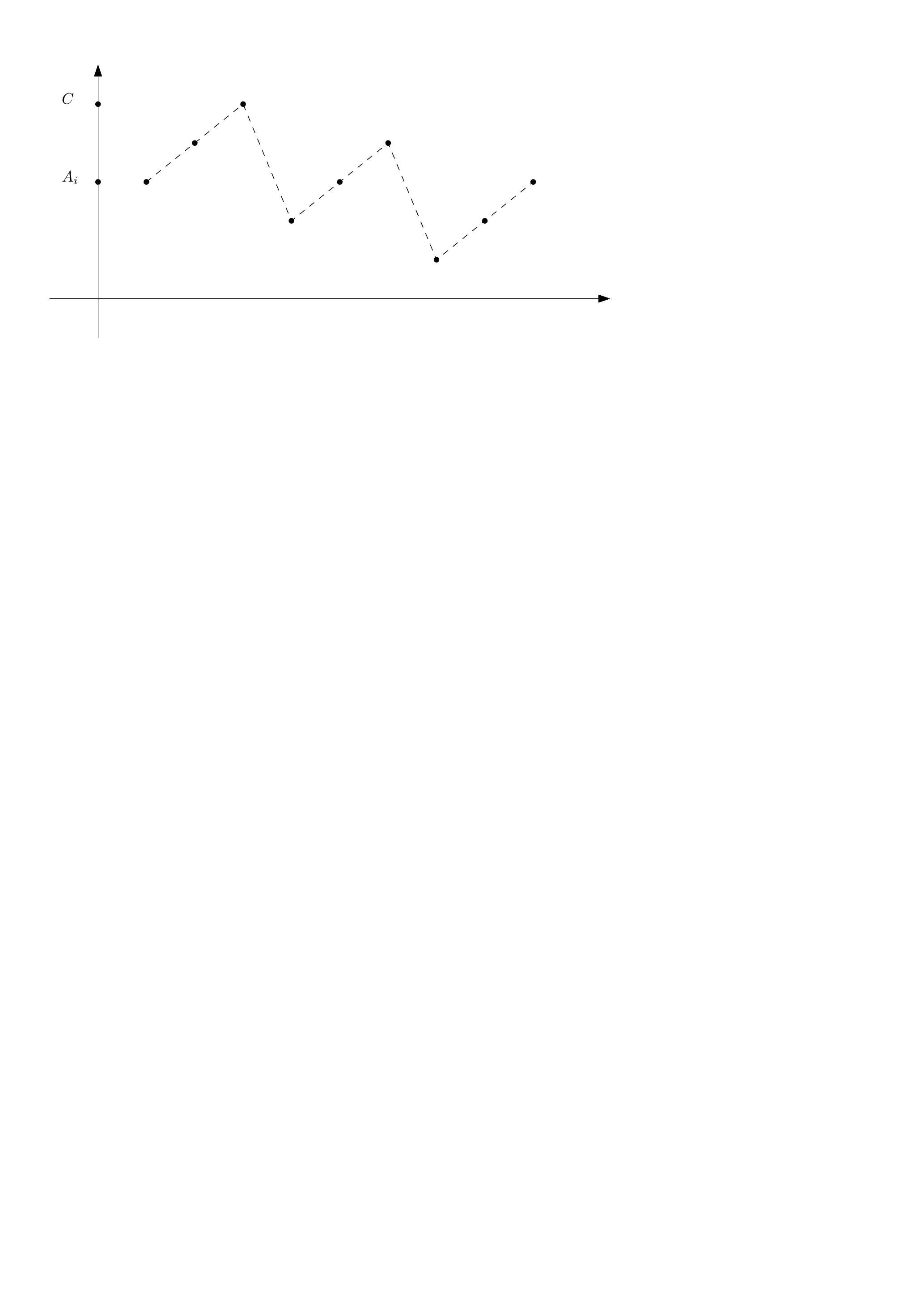}
	\caption{The straight-line circular drawing $\delta_G$ constructed from a \distincticorevS instance with chunks $C_1 = (2,5)$, $C_2 =(1,8,4)$, and $C_3 = (6,7,9,3)$.}
	\label{fig:reduction3_a}
	    \end{subfigure}
		\hfill
    \begin{subfigure}[t]{0.48\textwidth}
        \centering
\includegraphics[page=4]{figs/reductions}
    \caption{An example drawing obtained by applying an optimum untangling on $\delta_G$. 
Fixed vertices are marked in $\openbox$.  }
        \label{fig:reduction3_b}
    \end{subfigure}%
        \caption{The reduction from \distincticorevS to \textsc {Circular Untangling}}
		\label{fig:reduction3}
\end{figure}

\begin{lemma}
\label{lem:reduction}
  $I$ is a yes-instance of~\distincticorevS if and only if~$I'$ is a yes-instance of~\untangling.
\end{lemma}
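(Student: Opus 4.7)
The plan is to prove the equivalence by exploiting the combinatorial structure of~$G$. Since $G$ is a wedge of the $\ell$ cycles $K_1,\dots,K_\ell$ joined at the cut-vertex~$v_0$, every planar circular drawing of~$G$ corresponds to an outerplanar embedding, and reading such an embedding clockwise from $v_0$ yields exactly a sequence of the form $v_0,\,C_{\pi(1)}^{\varepsilon(1)} C_{\pi(2)}^{\varepsilon(2)}\cdots C_{\pi(\ell)}^{\varepsilon(\ell)}$ for some permutation $\pi$ of $\{1,\dots,\ell\}$ and some sign vector $\varepsilon\in\{-1,+1\}^\ell$. This one-to-one correspondence between the outerplanar embeddings of~$G$ and the candidate chunk orderings in~$I$ will be the backbone of both directions.

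For the forward direction, I would take an ordering $(\pi,\varepsilon)$ and an increasing subsequence $v_{j_1},\dots,v_{j_M}$ of $\hat C := C_{\pi(1)}^{\varepsilon(1)}\cdots C_{\pi(\ell)}^{\varepsilon(\ell)}$ witnessing that $I$ is a yes-instance, and then produce an untangling that keeps $v_0$ together with those $M$ vertices at their original positions on $\mathcal O$ and freely relocates the remaining $L-M=K$ vertices so that the clockwise order becomes $v_0,\hat C$. This is consistent because, in $\delta_G$, the $M$ witness vertices are already in increasing index order clockwise from $v_0$, which coincides with their reading order inside $\hat C$; hence the $K$ moves suffice and give a planar drawing.

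For the backward direction, I would start from an untangling of $\delta_G$ with at most $K$ moves, hence with at least $L+1-K=M+1$ fixed vertices. Its cyclic order equals $v_0,\hat C$ for some $(\pi,\varepsilon)$, and since the fixed vertices retain their positions on $\mathcal O$, their cyclic order is preserved from $\delta_G$. If $v_0$ itself is fixed, then the remaining $M$ fixed non-$v_0$ vertices appear in $\hat C$ in strictly increasing index order (because they do so clockwise from $v_0$ in $\delta_G$), giving an increasing subsequence of length $M$ and hence a yes-certificate for $I$.

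The step I expect to be the main obstacle is the case in which $v_0$ itself is moved, since then the $M+1$ fixed non-$v_0$ vertices appear in $\hat C$ only in a cyclic rotation of their increasing index order, which in general does not contain an increasing subsequence of length $M$. My plan is to argue by a case analysis that such an untangling can be modified, without increasing the move count, into one in which $v_0$ is fixed: if no single chunk's arc in the untangled drawing contains the original position of~$v_0$, then cyclically rotating the chunk order~$\pi$ aligns the wrap with a chunk boundary and makes the fixed vertices appear in increasing order in the resulting concatenation; otherwise one has to trade exactly one fixed vertex lying inside the offending wrap-chunk for $v_0$ itself. Verifying that such a single swap always suffices and that the resulting combinatorial data is realizable as a planar circular drawing is the most delicate step I anticipate.
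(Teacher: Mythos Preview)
Your forward direction and the $v_0$-fixed subcase of the backward direction are correct and coincide with the paper's (very terse) argument; you are also right to isolate the case where $v_0$ is moved as the crux, since the paper simply asserts the key equivalence without addressing it.

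The step you flag as most delicate is in fact impossible to complete, because the lemma as stated is false. Take $\ell = 1$, $C_1 = (3,4,5,6,1,2)$, so $L = 6$, and set $M = 5$, $K = L - M = 1$. Then $G$ is the $7$-cycle $v_0 v_3 v_4 v_5 v_6 v_1 v_2$, and shifting only $v_0$ into the arc between $v_2$ and $v_3$ yields the planar cyclic order $(v_0,v_3,v_4,v_5,v_6,v_1,v_2)$; hence $\shifto(\delta_G)=1$ and $I'$ is a yes-instance. But the only chunk orderings are $C_1$ and $C_1^{-1}$, whose longest increasing subsequences have lengths $4$ and $2$, so $I$ is a no-instance. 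Here the optimal untangling fixes all of $v_1,\dots,v_6$, while \emph{every} untangling that keeps $v_0$ fixed can fix at most $1+4=5$ vertices in total; thus trading any single $v_i$ for $v_0$ produces a $6$-element set that is not the fixed set of any planar drawing of $G$. The same obstruction occurs for $\ell\ge2$ (for instance $C_1=(5,6,7,8,1,2)$, $C_2=(3,4)$, $M=7$: one move untangles $\delta_G$ by fixing $v_1,\dots,v_8$, yet no concatenation $\hat C$ has an increasing subsequence of length $7$), so your chunk-rotation idea does not rescue the argument either. In short, neither your plan nor the paper's own proof closes this gap; making the reduction correct seems to require modifying the construction, e.g.\ by a gadget that makes moving $v_0$ prohibitively expensive.
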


\begin{proof}
    Observe that, since in~$\delta_G$ the vertices are ordered clockwise according to their numbering, the problem of untangling with at most~$L-M$ vertex moves is equivalent to finding a planar circular drawing of~$G$ whose clockwise ordering contains an increasing subsequence of at least~$M$ vertices, which can then be kept fixed; see Figure~\ref{fig:reduction3_b}.
    
    Since all the cycles of $G$ are joined at the vertex $v_0$,  the vertices of each cycle~$K_i$ are consecutive in every planar circular drawing of~$G$, and the order of its vertices is the order along~$K_i$, i.e., it is fixed up to reversal.  Hence the choice of a circular drawing whose clockwise ordering contains an increasing subsequence of at least~$M$ vertices directly corresponds to a permutation and reversal of the chunks~$C_i$.
\end{proof}

\section{A Tight Upper Bound of the Circular Shifting Number}
\label{sec: general-bound}

In this section, we investigate an upper bound of the circular shifting number and prove the following theorem.
\begin{theorem}
\label{th:general tight upper bound}
For every drawing $\delta_G$ of an $n$-vertex outerplanar graph $G$, the circular shifting number satisfies $\shifto(\delta_G) \le n - \lfloor\sqrt{n-2}\rfloor -2$, and this bound is tight. 

\end{theorem}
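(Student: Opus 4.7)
\medskip\noindent\textbf{Proof plan for Theorem~\ref{th:general tight upper bound}.}
The proof naturally splits into two halves: the upper bound $\shifto(\delta_G)\le n-\lfloor\sqrt{n-2}\rfloor-2$, which is equivalent to exhibiting a planar circular drawing of $G$ that agrees with $\delta_G$ on at least $\lfloor\sqrt{n-2}\rfloor+2$ vertices; and a matching lower bound, established by an explicit extremal family of drawings.

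For the upper bound, the plan is to combine the rigid structure of planar circular drawings of outerplanar graphs with an Erd\H{o}s--Szekeres-type argument. In any planar circular drawing of $G$, the vertices of each biconnected component appear consecutively on $\mathcal O$, and inside each block their order is determined up to reflection by the (essentially unique) outerplanar embedding of the block; cyclic rotations at cut-vertices and block reversals are the only remaining degrees of freedom. Given $\delta_G$, I first pick two convenient \emph{anchor} vertices that may be assumed fixed without loss of generality (for instance, two vertices that are consecutive on~$\mathcal O$ in $\delta_G$ and that are also allowed to be consecutive in a planar target drawing after a suitable rotation of the blocks). After removing these two anchors, the relative cyclic positions of the remaining $n-2$ vertices induce a permutation. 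Erd\H{o}s--Szekeres then guarantees a monotone subsequence of length $\lceil\sqrt{n-2}\rceil\ge\lfloor\sqrt{n-2}\rfloor$; increasing subsequences correspond directly to subsets of vertices realizable in the target planar drawing, and decreasing ones correspond after reflecting the orientation of the circle. Together with the two anchors this yields $\lfloor\sqrt{n-2}\rfloor+2$ vertices that can stay fixed.

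For tightness, I would write $n=k^2+2$ and construct a single outerplanar graph with a drawing saturating the bound. A clean candidate is the cycle $C_n$ itself, whose planar circular drawings are precisely the cyclic rotations and reflections of its Hamiltonian cycle ordering. If we place the vertices on $\mathcal{O}$ according to the classical Erd\H{o}s--Szekeres tight permutation on $n-2$ elements (namely $k$ consecutive decreasing runs of length $k$) together with the two anchors, then the longest monotone subsequence on $n-2$ elements has length exactly~$k=\lfloor\sqrt{n-2}\rfloor$, and a short case analysis over rotations and reflections of $C_n$ shows that at most $k+2$ vertices of $\delta_G$ can simultaneously lie in the same cyclic order as in a planar drawing. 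Hence $\shifto(\delta_G)=n-k-2$, matching the bound.

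The principal obstacle is executing the first step of the upper-bound argument for outerplanar graphs with nontrivial block-cut structure. Erd\H{o}s--Szekeres only supplies an abstract monotone subsequence; one must additionally certify that the chosen subsequence can be extended to a valid planar circular drawing, which couples together the choices of reversal for every biconnected component and the cyclic ordering of blocks around every cut-vertex. The intended way to handle this is to perform the anchor selection inside a well-chosen block (or around a single cut-vertex), then propagate the remaining embedding choices outward along the BC-tree, so that the monotone subsequence produced by Erd\H{o}s--Szekeres is automatically compatible with a consistent global planar drawing.
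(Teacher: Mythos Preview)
Your upper-bound argument has a genuine gap, and the paper sidesteps it with a much simpler idea.

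The paper fixes \emph{one} planar circular drawing $\delta^U_G$ of $G$, labels the vertices $v_1,\dots,v_n$ by their clockwise order in $\delta^U_G$, and observes that the clockwise order of $\delta_G$ is then a cyclic permutation of $1,\dots,n$. Any cyclic increasing (respectively decreasing) subsequence of this permutation is a set of vertices that already sits in the correct cyclic order for $\delta^U_G$ (respectively its mirror image), so those vertices can stay fixed while all others are moved. The bound then follows directly from the Erd\H{o}s--Szekeres theorem for \emph{cyclic} permutations: any cyclic sequence of $n\ge sr+2$ distinct reals contains a cyclic increasing subsequence of length $s+2$ or a cyclic decreasing one of length $r+2$. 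Taking $s=r=\lfloor\sqrt{n-2}\rfloor$ yields $\lfloor\sqrt{n-2}\rfloor+2$ fixable vertices. No anchors and no block--cut bookkeeping are needed; in particular, the ``principal obstacle'' you flag simply does not arise, because once a single target drawing is fixed globally, every common cyclic subsequence is automatically realizable.

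Your anchor strategy, by contrast, is not always executable. You need two vertices that are consecutive in $\delta_G$ and can be made consecutive in \emph{some} planar target; but when $G$ is a cycle the target is unique up to reflection, and such a pair need not exist. Concretely, for $C_5$ drawn with clockwise order $1,3,5,2,4$, no two cyclically consecutive vertices of $\delta_G$ are consecutive in either $1,2,3,4,5$ or its reverse, so the step ``pick two convenient anchor vertices'' fails already here, and the subsequent linear Erd\H{o}s--Szekeres argument never gets off the ground. The block-rotation flexibility you invoke is absent precisely in the biconnected case.

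For tightness the paper proceeds as you do (take $G$ a cycle, so the planar drawing is unique up to reflection), but obtains the extremal drawing directly from the tight instance of the cyclic Erd\H{o}s--Szekeres theorem rather than assembling it by hand from the linear construction plus anchors.
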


To prove the upper bound, we present an untangling that fixes at least $\lfloor\sqrt{n-2}\rfloor +2$ vertices in the following.
Let~$G=(V,E)$ be an $n$-vertex outerplanar graph with a circular drawing $\delta_G$ of $G$.  Let~$\delta^U_G$ be a planar circular drawing of~$G$.
We number the vertices of~$G$ as~$v_1,\dots,v_n$ in clockwise order according to their occurrence in~$\delta^U_G$.
Now consider untangling $\delta_G$ by moving vertices such that the vertices are ordered as~$v_1,\dots,v_n$ clockwise or counterclockwise.
Doing this with a minimum number of vertex moves is equivalent to finding a longest increasing or decreasing subsequence of the ordering of the vertices in $\delta_G$, which can be fixed during the transformation. 
The claimed bound follows from the Erd\H os-Szekeres Theorem for cyclic permutations.

\begin{theorem}[Erdős–Szekeres theorem for cyclic permutation~\cite{10.2140/involve.2019.12.351}]
\label{th: cyclic-est}
For any two integers $s, r$, any cyclic sequence of $n \geq sr + 2$ distinct real numbers
has an increasing cyclic subsequence of $s + 2$ terms or a decreasing cyclic subsequence of $r + 2$ terms, and this bound is tight.
\end{theorem}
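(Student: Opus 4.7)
The plan is to prove both directions of Theorem~\ref{th:general tight upper bound} by passing to the cyclic permutation that $\delta_G$ induces on the vertex order of a fixed reference planar circular drawing, and then applying Theorem~\ref{th: cyclic-est}.

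For the upper bound, I would start by choosing an arbitrary planar circular drawing $\delta^U_G$ of $G$ (which exists since $G$ is outerplanar) and labelling its vertices $v_1,\ldots,v_n$ in clockwise order along $\mathcal{O}$. Reading these labels in the clockwise order prescribed by $\delta_G$ produces a cyclic sequence of $n$ distinct integers. Setting $s=r=\lfloor\sqrt{n-2}\rfloor$ makes $sr+2\le n$, so Theorem~\ref{th: cyclic-est} yields either a strictly increasing or a strictly decreasing cyclic subsequence of length $\lfloor\sqrt{n-2}\rfloor + 2$. I would then argue that the vertices corresponding to this subsequence can be kept in place while the remaining $n-\lfloor\sqrt{n-2}\rfloor - 2$ vertices are moved to the arc positions prescribed by $\delta^U_G$ (or its mirror image, in the decreasing case): because the fixed vertices already sit in the correct relative cyclic order, inserting the remaining vertices between them according to $\delta^U_G$ reproduces the combinatorial embedding of $\delta^U_G$ and hence yields a planar circular drawing.

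For tightness I would use the cycle graph $G=C_n$. A key observation is that every planar circular drawing of $C_n$ has its vertices in the cyclic order of the Hamilton cycle (up to rotation and reflection), so any set of fixed vertices of an untangling of a drawing $\delta_{C_n}$ must appear as a monotone cyclic subsequence of the label sequence read along $\delta_{C_n}$. I would then invoke the tightness part of Theorem~\ref{th: cyclic-est}: for $s=r=\lfloor\sqrt{n-2}\rfloor$, there is a cyclic permutation of the appropriate length with no strictly increasing cyclic subsequence of length $s+2$ and no strictly decreasing one of length $r+2$. Realising this permutation as the vertex order of $\delta_{C_n}$ on $\mathcal{O}$ caps the size of any fixable set at $\lfloor\sqrt{n-2}\rfloor+1$, forcing $\shifto(\delta_{C_n})\ge n-\lfloor\sqrt{n-2}\rfloor - 2$.

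The main obstacle I foresee is the lower-bound construction for arbitrary $n$: the cyclic Erd\H{o}s--Szekeres extremal example is naturally phrased for $n=sr+1$, whereas $n-2$ is rarely a perfect square, so for general $n$ I would need to either build a tight $(s,s)$-avoiding cyclic permutation on exactly $n$ elements or pad a smaller extremal permutation with a short ``tail'' that does not create longer monotone cyclic subsequences. A careful interleaving of the classical grid-style construction (reading an $s\times s$ array of distinct integers column-by-column, with columns internally increasing and columns ordered decreasingly by first element) should extend smoothly from $n=s^2+1$ to any $n$ with $\lfloor\sqrt{n-2}\rfloor=s$ by appending at most $2s$ extra elements in an order that neither extends the longest increasing nor the longest decreasing cyclic run.
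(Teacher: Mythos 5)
Your proposal does not prove the statement it was asked to prove. The statement is Theorem~\ref{th: cyclic-est} itself --- the cyclic Erd\H{o}s--Szekeres theorem --- but your argument \emph{invokes} that theorem as a black box in order to derive Theorem~\ref{th:general tight upper bound}. Relative to the assigned target this is circular: you assume exactly the claim whose proof was requested. What you have actually sketched is, in substance, the paper's own Section~3 derivation of the bound $\shifto(\delta_G) \le n - \lfloor\sqrt{n-2}\rfloor - 2$ and its tightness (fix a reference planar circular drawing, read off a cyclic permutation, keep a longest monotone cyclic subsequence fixed, and use a cycle graph together with an extremal cyclic permutation for the lower bound). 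That is a reasonable account of Theorem~\ref{th:general tight upper bound}, but it is not a proof of Theorem~\ref{th: cyclic-est}; the paper itself does not prove the latter either, importing it from the cited reference.

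A proof of the actual statement would have to address two things your proposal leaves untouched. First, the existence half is a genuine strengthening of the linear Erd\H{o}s--Szekeres theorem: cutting the cyclic sequence at any single point and applying the linear theorem to the remaining $n-1 \ge sr+1$ elements only yields an increasing subsequence of $s+1$ terms or a decreasing one of $r+1$ terms, one short of the claimed $s+2$ (resp.\ $r+2$); one must exploit the freedom to choose the cut point (or argue directly over all cyclic shifts) to gain the extra term. Second, the tightness half requires exhibiting, for every $s$ and $r$, a cyclic arrangement of $sr+1$ distinct reals with no increasing cyclic subsequence of $s+2$ terms and no decreasing cyclic subsequence of $r+2$ terms --- and one must check the \emph{cyclic} monotone subsequences of the grid-style construction, not just the linear ones, since reading the blocks around the wrap-around point can create longer monotone runs than in the linear setting. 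Neither step appears in your write-up, so as a proof of Theorem~\ref{th: cyclic-est} the proposal has no content.
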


Moreover, observe that for cycles, the planar drawing is unique up to reversal, and therefore untangling a drawing of a cycle with a minimum number of moves is equivalent to determining a longest increasing or decreasing subsequence in the fixed cyclic ordering determined by the cycle.  Hence, a tight example can be obtained from a tight example for the above theorem.

\section{A Tight Upper Bound for Almost-Planar Drawings}
\label{sec:bound}
Let $G=(V,E)$ be an outerplanar graph and let $\delta_G$ be an almost-planar circular drawing of $G$.  
In this section, we present an untangling for such almost-planar circular drawings that provides a tight upper bound of $\lfloor\frac{n}{2}\rfloor-1$ on $\shifto(\delta_G)$.

\begin{theorem}
\label{th:tight upper bound}
For every almost-planar drawing $\delta_G$ of an $n$-vertex outerplanar graph $G$ the circular shifting number satisfies $\shifto(\delta_G) \le \lfloor\frac{n}{2}\rfloor-1$, and this bound is tight. 
\end{theorem}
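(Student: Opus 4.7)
Let $e = (u,v)$ be the unique edge of $\delta_G$ involved in all crossings, and let $A$ and $B$ be the sets of vertices lying on the two open arcs of $\mathcal{O}$ cut off by $u$ and $v$, so that $|A| + |B| = n - 2$. Without loss of generality $|B| \le |A|$, which gives $|B| \le \lfloor (n-2)/2 \rfloor = \lfloor n/2 \rfloor - 1$. The plan is to produce an untangling that shifts only the $|B|$ vertices of $B$ and keeps $A \cup \{u,v\}$ in place, which immediately yields the claimed upper bound.

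The first step is the following observation: the subdrawing of $\delta_G$ induced by $A \cup \{u,v\}$ is already a planar circular outerplanar drawing of $G[A \cup \{u,v\}]$. Indeed, any edge of $G$ with both endpoints in $A \cup \{u,v\}$ lies strictly on one side of $e$ and hence cannot cross $e$; and since $\delta_G$ is almost planar, no two such edges cross each other either. So fixing $A \cup \{u,v\}$ at their current positions preserves a planar partial drawing, and the problem reduces to reinserting the vertices of $B$ at new positions on $\mathcal{O}$ so that the full drawing becomes planar.

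The central lemma to prove is that the cyclic suborder of $A \cup \{u,v\}$ inherited from $\delta_G$ extends to a planar circular drawing of the whole graph $G$. The plan is to exploit the block-cut structure of the outerplanar graph $G$: each 2-connected block has a cyclic vertex order that is unique up to reflection, and at every cut-vertex the incident blocks can be permuted freely in any outerplanar drawing. The planar suborder of $A \cup \{u,v\}$ in $\delta_G$ already pins down the reflection of every block meeting $A \cup \{u,v\}$ in at least three vertices, and the remaining freedom—reflection of small blocks and permutation of attached components at cut-vertices—should be enough to insert each $B$-vertex into a slot consistent with its adjacencies to $A \cup \{u,v\}$ and to the other vertices of $B$.

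For the tightness part, I would exhibit a family of outerplanar graphs whose almost-planar drawings force $\lfloor n/2 \rfloor - 1$ shifts, taking as the template a cycle $C_n$ drawn so that the two arcs cut off by the bad edge contain $\lceil (n-2)/2 \rceil$ and $\lfloor (n-2)/2 \rfloor$ vertices in an interleaved order chosen so that no planar redrawing of $C_n$ agrees with $\delta_G$ on more than $\lceil n/2 \rceil + 1$ positions. The main obstacle is the central lemma above, and in particular the case of the 2-connected block containing $e$: because $e$ itself is involved in all crossings, the drawing of that block inside $\delta_G$ is not planar, so one has to verify carefully that the suborder of its vertices lying in $A \cup \{u,v\}$ is nonetheless consistent with the (essentially unique) outerplanar cyclic order of the block—this is where the almost-planarity hypothesis does the real combinatorial work.
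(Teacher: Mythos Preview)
Your high-level reduction is exactly the paper's: let $L$ and $R$ (your $A$ and $B$) be the two arcs cut off by $u,v$, and show that one can untangle by moving only the smaller side while fixing $u$, $v$, and the larger side. This gives $\shifto(\delta_G)\le\min\{|L|,|R|\}\le\lfloor n/2\rfloor-1$ immediately; the paper states this as its Theorem~\ref{th:one_side_move}.

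Where you diverge is in how to prove the central lemma that the fixed side's cyclic order extends to a planar circular drawing of all of $G$. You propose a block--cut-tree argument: blocks have a cyclic order unique up to reflection, three fixed vertices pin a reflection, and the remaining freedom suffices. The paper instead does a case analysis on the connectivity of $u$ and $v$ in $G'=G-e$: if they are disconnected, each crossing component can be slid around to one side (Proposition~\ref{prop:thcc1}); if they are $2$-connected, $\delta_G$ is already planar (Proposition~\ref{prop:Biconected}); and in the interesting case where they are connected but not $2$-connected, the paper decomposes $G'$ along the edges crossing $e$, proves structural lemmas about ``connecting'' versus ``non-connecting'' pieces (Lemmas~\ref{lem:non-connecting-component} and~\ref{lem:connecting-component}, which use $K_{2,3}$-minor arguments), and gives explicit moves---crucially, \emph{reversing} the clockwise order of each moved piece as it is inserted on the other side.

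Your plan is plausible but, as you yourself flag, the block containing $e$ is precisely where all the work lies, and you do not resolve it. Concretely: that block is not drawn planarly in $\delta_G$, so you cannot read off its correct reflection from $\delta_G$; you would need to argue that its vertices on the fixed side already sit in an order compatible with one of the two Hamiltonian orientations of the block. This is essentially what the paper's Case~2.2 establishes, but via the component/minor analysis rather than a direct block argument. So your proposal is a correct outline with the same endpoint, but the step you label ``main obstacle'' is not an afterthought---it is the entire content of the proof, and the paper's route to it is rather different from what you sketch.

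For tightness, the paper gives the explicit cycle example you allude to: for even $n$, take $C_n$ drawn with clockwise order $v_2,v_4,\ldots,v_n,v_{n-1},v_{n-3},\ldots,v_1$, and argue that either target orientation of $C_n$ forces moving all but two of the $\tfrac{n}{2}+1$ vertices $v_1,v_3,\ldots,v_{n-1},v_n$. Your description of the tight family is too vague to count as a proof; you would need to specify the interleaving and carry out the counting.
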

\begin{figure}[htb]
	\centering
\includegraphics[page=7]{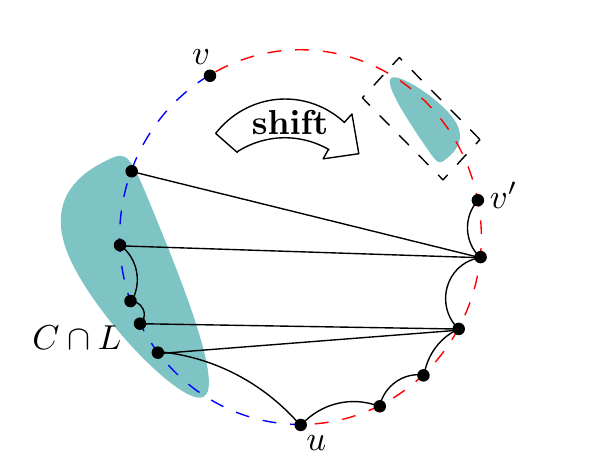}
\caption{An almost-planar drawing $\delta_G$ with $\shifto(\delta_G) = \frac{n}{2}-1$.}    
\label{fig:sect2-3}
\end{figure}
To see that the bound is tight, let $n \ge 4$ be an even number and let $G$ be the cycle on vertices $v_1,\ldots,v_n, v_1$ (in this order) and let~$\delta_G$ be a drawing with the clockwise order $v_2,\ldots, v_{2i}\ldots, v_n, v_{n-1},\ldots,v_{2i+1},\ldots,v_1$; see Figure~\ref{fig:sect2-3}.

We claim that $\shifto(\delta_G) \ge \frac{n}{2}-1$. 
Clearly, the clockwise circular ordering of its vertices in a crossing-free circle drawing is either $v_1,v_2,\ldots,v_n$ or its reversal.
Assume that we turn it to the clockwise ordering $v_1,v_2,\ldots,v_n$; the other case is symmetric.
In $\delta_G$, the $\frac{n}{2}$ odd-index vertices $v_1,\ldots,v_{2i+1}\ldots, v_{n-1}$ and $v_n$ are ordered counterclockwise. 
To reach a clockwise ordering, we need to move all but two of these vertices.
Thus, at least $\frac{n}{2}-1$ vertices in total are required to move.

The remainder of this section is devoted to proving the upper bound.  
Let $e=uv$ be the edge of $\delta_G$ that contains all the crossings, and let $G' = G-e$ and $\delta_{G'}$ be the circular drawing of $G'$ by removing the edge $e$ from $\delta_G$. 
The edge $uv$ partitions the vertices in $V \setminus \{u,v\}$ into the sets $L$ and $R$ that lie on the left and right side of the edge $uv$ (directed from $u$ to $v$).

\begin{theorem}
\label{th:one_side_move}
Let $\delta_G$ be an almost-planar drawing of an outerplanar graph $G$. A planar circular drawing of $G$ can be obtained by moving only vertices of $L$ or only vertices of $R$ to the other side in $\delta_G$ and fixing all the remaining vertices.
The untangling moves only $\min\{|L|,|R|\}$ vertices and can be computed in linear time.
\end{theorem}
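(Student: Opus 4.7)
Assume without loss of generality that $|L|\le|R|$; the other case is symmetric. The strategy is to define a new cyclic order $\sigma^\ast$ of $V(G)$ with three properties: (i)~$u$ and $v$ are consecutive in $\sigma^\ast$, so the new $L$-arc is empty and $e$ lies on the boundary of the resulting drawing; (ii)~the cyclic order restricted to $\{u,v\}\cup R$ coincides with that of $\delta_G$; and (iii)~$\sigma^\ast$ is an outerplanar cyclic order of~$G$. Given such a $\sigma^\ast$, the induced circular drawing $\delta_G^\ast$ is planar by~(iii); by~(ii), $u$, $v$, and every $R$-vertex can be left at their original angular positions, and only the $|L|$ vertices of $L$ must be moved to new positions inside the $R$-arc. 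This accounts for exactly $\min\{|L|,|R|\}$ vertex moves.

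The main step is to establish the existence of such a $\sigma^\ast$. If $\delta_G$ is already planar, take $\sigma^\ast=\sigma$ and move no vertex. Otherwise $\delta_G$ is non-planar, and I claim that $G$ admits an outerplanar embedding $\mathcal{E}$ with $e$ on the outer face: for if every outerplanar embedding of $G$ had $e$ as an internal chord, then the outer Hamilton cycle of each 2-connected block of $G'=G-e$ would be identical to that of~$G$ (removing an internal chord does not change the outer face in a 2-connected outerplanar graph). Consequently $\sigma$ itself would be an outerplanar cyclic order of~$G$ with $e$ failing to interleave any other edge, making $\delta_G$ planar---a contradiction. Given $\mathcal{E}$, I orient its outer-face traversal block by block, using that the outerplanar embedding of each 2-connected block is unique up to reflection, in such a way that the induced cyclic order on $\{u,v\}\cup R$ matches that of~$\sigma$, and take $\sigma^\ast$ to be the resulting cyclic order.

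Properties~(i) and~(iii) are immediate from the construction via~$\mathcal{E}$. Property~(ii) is ensured by the block-wise orientation choice and is the main technical step: the restriction of $\sigma$ to $\{u,v\}\cup R$ is itself an outerplanar cyclic order of $G[\{u,v\}\cup R]$, which (within each 2-connected block) is determined up to reflection, so the two free orientations of $\mathcal{E}$ in that block provide exactly the freedom needed to align with~$\sigma$. The main obstacle is handling the non-2-connected case: one must process the block-cut tree of~$G$ and argue that the local choices at different blocks can be made consistently, in particular at cut vertices shared between blocks that also contain vertices of~$R$. The linear-time bound then follows from computing $\mathcal{E}$ in $O(n)$ time via a standard outerplanar-embedding algorithm and reading off $\sigma^\ast$ via a single outer-face traversal.
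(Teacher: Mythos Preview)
Your approach is genuinely different from the paper's. The paper proves Theorem~\ref{th:one_side_move} constructively by a case analysis on the connectivity of $u$ and $v$ in $G'=G-e$: when they are disconnected, each crossing component is shifted individually (Proposition~\ref{prop:thcc1}); when they are 2-connected, $\delta_G$ is already planar (Proposition~\ref{prop:Biconected}); and in the remaining case the paper decomposes $G'-X$ (where $X$ is the set of $L$--$R$ edges) into \emph{connecting} and \emph{non-connecting} left/right components, proves two structural lemmas about them via $K_{2,3}$-minor arguments (Lemmas~\ref{lem:non-connecting-component} and~\ref{lem:connecting-component}), and then flips each left component to the right one at a time (Propositions~\ref{prop:thncc} and~\ref{prop:thcc}). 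Your idea---produce the target outerplanar order $\sigma^\ast$ in one shot by taking an outerplanar embedding $\mathcal E$ with $e$ on the outer face and orienting blocks to match $\sigma$ on $\{u,v\}\cup R$---would, if it worked, be more conceptual and avoid the component-by-component bookkeeping.

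However, your justification of property~(ii) has a real gap. You argue that $\sigma|_{\{u,v\}\cup R}$ is an outerplanar order of $G[\{u,v\}\cup R]$ and is ``determined up to reflection within each 2-connected block,'' and that the two orientations of each block of $\mathcal E$ supply exactly that freedom. But the blocks of $G[\{u,v\}\cup R]$ and the blocks of $G$ are not the same objects: a block $B$ of $G$ may meet $R$ in a set that is not 2-connected in $G[\{u,v\}\cup R]$, so $\sigma|_{B\cap R}$ is not a priori one of the two restrictions coming from reflecting $B$ in $\mathcal E$. Conversely, a single block of $G[\{u,v\}\cup R]$ may span several blocks of $G$. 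You also only invoke the reflection freedom per block, but outerplanar embeddings have a second degree of freedom---the arrangement of subtrees around each cut vertex---which you do not address. You yourself flag ``handling the non-2-connected case'' and the consistency of local choices at cut vertices as ``the main obstacle,'' but you do not resolve it. Filling this gap appears to require precisely the kind of structural control the paper obtains from its $K_{2,3}$-minor lemmas: that adjacent connecting components are joined through a single vertex, and that a non-connecting component hangs off at most two adjacent vertices of its neighbour. Without those facts (or an equivalent), the block-orientation argument does not go through.
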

This immediately implies the upper bound from Theorem~\ref{th:tight upper bound}, since $|L\cup R| = n-2$, and therefore $\min\{|L|,|R|\} \le  \lfloor\frac{n}{2}\rfloor-1$.
To prove Theorem~\ref{th:one_side_move}, we distinguish different cases based on the connectivity of $u$ and $v$ in $G'$.

\medskip
\noindent\textbf{Case 1: $u,v$ are not connected in $G'$.} Consider a connected component $C$ of $G'$ that contains vertices from $L$ and from $R$.

\begin{proposition}
\label{prop:thcc1}
Suppose $u,v$ are not connected in $G'$. Let $C$ be a connected component of $G'$ that contains vertices from $L$ and from $R$.  It is possible to obtain a new almost-planar drawing $\delta'_G$ of $G$ from~$\delta_G$ by moving only the vertices of $C \cap L$ (resp.\ $C \cap R$) such that $C$ lies entirely on the right (resp.\ left) side of $uv$.
\end{proposition}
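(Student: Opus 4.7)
The plan is to exhibit an explicit placement of the vertices of $C\cap L$ in the right arc that (i) preserves the cyclic order of $C$'s vertices---so that $C$ remains planar---and (ii) keeps every other vertex in place, yielding an almost-planar drawing. I would treat the sub-cases $u\in C$, $v\in C$, and $u,v\notin C$ uniformly by regarding any $u$ or $v$ that happens to lie in $C$ as a fixed $C$-vertex already located at an endpoint of the arc.

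I would begin with two structural observations arising from $\delta_G$ being almost-planar. First, since removing $uv$ from $\delta_G$ yields the planar drawing $\delta_{G'}$, the sub-drawing of $C$ in $\delta_G$ is itself planar, and the cyclic vertex order $\sigma_C$ of $C$ along the circle is $L_{i_1},\ldots,L_{i_p},R_{j_1},\ldots,R_{j_q}$ (read clockwise), where $L_{i_1},\ldots,L_{i_p}$ are the vertices of $C\cap L$ in their left-arc order and $R_{j_1},\ldots,R_{j_q}$ those of $C\cap R$ in their right-arc order. Second, any two $L$-to-$R$ edges of $G'$ with disjoint endpoints are \emph{reverse-monotone}: for $L_{a_1}R_{b_1},L_{a_2}R_{b_2}\in E(G')$ with $a_1<a_2$ one must have $b_1>b_2$, since otherwise the two edges interleave on the circle and cross in $\delta_G$, violating almost-planarity.

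The construction then chooses an index $s\in\{0,\ldots,p\}$ and inserts $L_{i_{s+1}},\ldots,L_{i_p}$, in this order, into the right arc strictly between $v$ and $R_{j_1}$ (possibly interleaved with $R\setminus C$-vertices already present in that segment), and $L_{i_1},\ldots,L_{i_s}$, in this order, into the right arc strictly between $R_{j_q}$ and $u$. Every such placement is a cyclic rotation of $\sigma_C$, so the sub-drawing of $C$ remains planar and lies entirely on the right side of $uv$; and, because the relative cyclic order of every vertex outside $C\cap L$ is preserved, crossings among edges of $G'\setminus C$ are unchanged. It thus only remains to choose $s$ (and the fine interleaving with $R\setminus C$) so that no edge of $C$ crosses any edge of $G'\setminus C$ in $\delta'_G$.

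The main obstacle is showing that such a choice always exists. The disjointness of endpoint sets reduces the problem to pairs $(e_C,e_N)=(c_1c_2,n_1n_2)$ with $c_1\in C\cap L$, $c_2\in C\cap R$, $n_1\in L\setminus C$, $n_2\in R\setminus C$; the reverse-monotone property pins down the relative left-arc order of $c_1,n_1$ and the relative right-arc order of $c_2,n_2$, and translates the non-crossing requirement for $(e_C,e_N)$ in $\delta'_G$ into an interval of admissible insertion positions for $c_1$. The key leverage in proving joint feasibility is that any hypothetical obstruction would require $C$ to contain an edge spanning or skipping over an $R\setminus C$-vertex which is itself joined to $L\setminus C$ by an edge of $G'\setminus C$; those two edges would already interleave in $\delta_G$ and so produce a crossing not involving $uv$, contradicting almost-planarity of the input. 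Hence such obstructions cannot arise, and a feasible choice of $s$ together with the desired placement always exists.
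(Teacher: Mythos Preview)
Your high-level plan---preserve the cyclic order of $C$ so that $C$ stays planar, then argue that no $C$-edge crosses a $(V\setminus C)$-edge---is exactly the paper's strategy. However, you make the construction much more elaborate than necessary and leave the decisive step as a sketch.

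The paper's argument is a one-line construction with no free parameters. Since $u$ and $v$ are not connected in $G'$, the component $C$ misses at least one of them; say $v\notin C$. Let $v'$ be the first $C$-vertex encountered clockwise after $v$, and move \emph{all} of $C\cap L$, in their original clockwise order, to positions immediately before $v'$. That is it: there is no split index $s$, and there is no interleaving with $R\setminus C$. This is a cyclic rotation of $\sigma_C$, so $C$ stays crossing-free, and since nothing outside $C\cap L$ moved, it only remains to rule out new $C$-versus-$(V\setminus C)$ crossings.

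The structural fact that does the work (and which your ``reverse-monotone'' observation only partially captures) is this: because $C$ is \emph{connected} and $\delta_{G'}$ is planar, every $(V\setminus C)$-edge already has both endpoints in the \emph{same} arc between two consecutive $C$-vertices on the circle. Indeed, two different such arcs bound different faces of the (connected) drawing of $C$, so a chord joining them would have to cross a $C$-edge. After the move, each arc between consecutive $C\cap R$-vertices is unchanged, and all remaining arcs---the old $(L_{i_k},L_{i_{k+1}})$ arcs and the two wrap-around arcs through $u$ and $v$---are merged into the single arc from $R_{j_q}$ around to the (new) first vertex of the $C\cap L$ block. Any $(V\setminus C)$-edge therefore still has both endpoints in one arc between consecutive $C$-vertices, hence in one (convex) face of $C$'s drawing, and cannot cross a $C$-edge. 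This is what the paper means by ``the choice of $v'$ ensures that no edge of $C$ alternates with an edge whose endpoints lie in $V\setminus C$.''

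Your approach introduces the parameter $s$ and possible interleavings, then tries to argue that \emph{some} feasible choice exists. The final paragraph is not a proof: you assert that an obstruction would force a particular interleaving configuration in $\delta_G$, but you neither classify the obstructions precisely nor show that their absence yields a consistent global choice of $s$ and insertion positions. Once you have the structural fact above, all of this machinery is unnecessary---the specific choice $s=0$ with the block placed immediately before $v'$ always works, and the verification is a direct case check rather than an existence argument.
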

\begin{proof}
Since $u,v$ are not connected in $G'$, $C$ contains at most one of $u,v$.
Without loss of generality, we assume that $v \notin C$; see Figure~\ref{fig:sect2-1_a}. 
Let $v'$ be the first clockwise vertex after $v$ that lies in $C$.  
Let $\delta'_G$ be the drawing obtained from $\delta_G$ by moving the vertices of $C \cap L$ clockwise just before $v'$ without changing their clockwise ordering.  
Observe that this removes all crossings of $e$ with $C$. 
The choice of $v'$ ensures that no edge of $C$ alternates with an edge whose endpoints lie in $V\setminus C$.
Finally, the vertices of $C$ maintain their clockwise order.  
This shows that no new crossings are introduced, and the crossings between $e$ and $C$ are removed.
\end{proof}
By applying Proposition~\ref{prop:thcc1} for each connected component of $G'$ that contains vertices from $L$ and from $R$, we obtain a planar circular drawing of $G$.

\begin{figure}[tb]
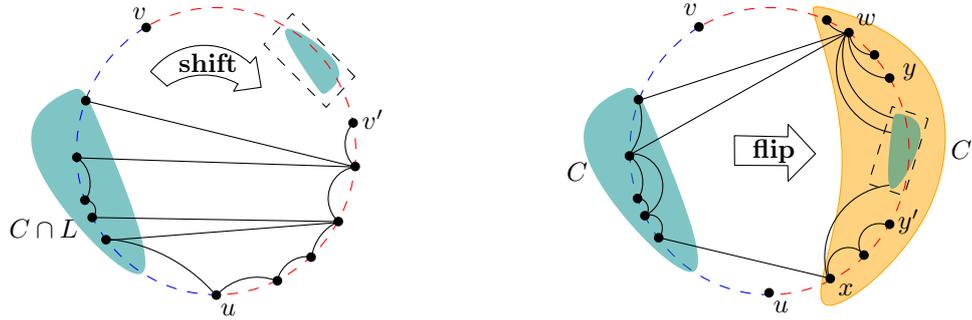

    \centering
	    \begin{subfigure}[t]{0.48\textwidth}
	        \centering
	            \includegraphics[page=1]{figs/section2.pdf}
	\caption{Case 1}
\label{fig:sect2-1_a}
	    \end{subfigure}
		\hfill
    \begin{subfigure}[t]{0.48\textwidth}
        \centering
    \includegraphics[page=2]{figs/section2.pdf}
    \caption{Case 2.2 (non-connecting component)}
\label{fig:sect2-1_b}
    \end{subfigure}%
        \caption{Moving a left component, keeping/reversing the clockwise ordering of its vertices.}
\label{fig:sect2-1}
\end{figure}

\medskip
\noindent\textbf{Case 2: $u,v$ are connected in $G'$.}
Let $C$ be the connected component in $G'$ that contains both vertices $u$ and $v$.
Note that if $C'$ is another connected component of $G'$, then it must lie entirely to the left or entirely to the right of edge $e$. 
Here, we ignore such components as they never need to be moved. 
We may hence assume that $G'$ is connected.

\smallskip
\noindent\textbf{Case 2.1: $u,v$ are 2-connected in $G'$.} We claim that in this case $\delta_G$ is already planar.

\begin{proposition}
\label{prop:Biconected}
  If $u$ and $v$ are 2-connected in $G'$, then $\delta_G$ is planar.
\end{proposition}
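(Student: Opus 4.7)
The plan is to argue by contradiction: I will assume $\delta_G$ contains a crossing and extract a $K_4$ topological minor, contradicting outerplanarity. First, I use the 2-connectedness of $u, v$ in $G'$ to obtain two internally vertex-disjoint paths $P_1, P_2$ from $u$ to $v$ in $G'$, whose union $C = P_1 \cup P_2$ is a cycle of $G$ avoiding $e = uv$. Since $\delta_G$ is almost-planar and $e \notin E(C)$, the induced drawing of $C$ is crossing-free. I then invoke the folklore fact that a cycle drawn crossing-free as straight-line chords on a circle must have its vertices appear on $\mathcal{O}$ in the cyclic order of $C$; consequently $u$ and $v$ split $\mathcal{O}$ into two arcs whose interiors contain exactly the interior vertices of $P_1$ and of $P_2$, respectively.

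Next, I suppose some edge $e' = xy \neq e$ crosses $e$. Then $x, y \notin \{u, v\}$ and $x, y$ lie on opposite arcs; without loss of generality $x$ is on the $P_1$-arc and $y$ on the $P_2$-arc. The key claim is that both $x$ and $y$ must lie on $C$. If $x \notin C$, then among the $C$-vertices on $\mathcal{O}$ it sits strictly between two consecutive ones $c_i, c_{i+1}$, which (by the cyclic-order observation) are also consecutive on $C$, so $c_i c_{i+1} \in E(C)$. The straight-line segment $c_i c_{i+1}$ separates the small arc containing $x$ from the rest of $\mathcal{O}$ and in particular from $y$, so $e'$ would cross $c_i c_{i+1} \neq e$, violating almost-planarity. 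The symmetric argument forces $y \in C$.

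Finally, with $x$ an interior vertex of $P_1$ and $y$ an interior vertex of $P_2$, the subgraph $C \cup \{e, e'\}$ is a subdivision of $K_4$ with branch vertices $u, v, x, y$: the four sub-paths of $C$ cut out by $\{u, v, x, y\}$ provide four of the six required connections, and $e, e'$ provide the two remaining ones. This forces a $K_4$ minor in $G$, contradicting outerplanarity. I expect the middle step --- pinning down $x, y \in C$ --- to be the main obstacle, since it requires a secondary almost-planarity argument to rule out non-$C$ endpoints; once the endpoints are placed on $C$, exhibiting the $K_4$ subdivision and invoking the standard minor obstruction for outerplanar graphs concludes the proof.
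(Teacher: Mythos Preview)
Your proof is correct and follows essentially the same route as the paper: find a cycle $C$ through $u$ and $v$ in $G'$, argue that any edge $e'$ crossing $e$ must have both endpoints on $C$, and conclude with a $K_4$ minor on branch vertices $u,v,x,y$. The only difference is presentational: where the paper invokes the interior region of the closed curve $C$ in the planar drawing $\delta_{G'}$ to force $x,y\in C$ in one line, you spell out the same fact via the chord $c_ic_{i+1}$ separating a stray endpoint from its partner, which is a perfectly valid (and arguably clearer) way to justify that step.
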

\begin{proof}
If vertices $u,v \in V$ are 2-connected in $G'$, then $G'$ contains a cycle $C$ that includes both $u$ and $v$. 
In $\delta_{G'}$, this cycle is drawn as a closed curve.  
Any edge that intersects the interior region of this closed curve therefore has both endpoints on $C$.  
If there exists an edge $e'=xy$ that intersects $e=uv$, then contracting the four subpaths of $C$ connecting each of $\{x,y\}$ to each of $\{u,v\}$ yields a $K_4$-minor in $G$, which contradicts the outerplanarity of $G$.
\end{proof}

\smallskip
\noindent\textbf{Case 2.2: $u,v$ are connected but not 2-connected in $G'$.}
In this case $G'$ contains at least one cut-vertex that separates $u$ and $v$.
Notice that each path from $u$ to $v$ visits all such cut-vertices between $u$ and $v$ in the same order. 
Let $f$ and $l$ be the first and the last cut-vertex on any $uv$-path.
Additionally, add $u$ to the set of $L,R$ that contains $f$ and likewise add $v$ to the set of $L,R$ that contains $l$.
Let $X$ denote the set of edges of $G'$ that have one endpoint in $L$ and the other in $R$.  
Each connected component of $G'-X$ is either a subset of $L$ or a subset of $R$, which are called \emph{left} and \emph{right components}, respectively. 
We call a component of $G'-X$ \emph{connecting} if it contains either $u$ or $v$, or removing it from $G'$ disconnects $u$ and $v$.  
For a left component $C_L$ and a right component $C_R$, we denote by $E(C_L,C_R)$ the set of edges of $G'$ that connect a vertex from $C_L$ to a vertex in $C_R$. 
We can observe that since $G'$ is connected, for any edge that connects a left and a right component, at least one of the components must be connecting. 
We use the following observation.
\begin{observation}
\label{obs:no-skip-path}
If $P$ is an $xy$-path in a left (right) component $C$, then it contains all vertices of $C$ that are adjacent to a vertex of a right (left) component and lie between $x$ and $y$ on the left (right) side.
\end{observation}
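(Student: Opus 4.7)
The plan is a topological argument based on the observation that $\delta_{G'}$ is planar: since $e = uv$ is the unique edge involved in crossings of the almost-planar drawing $\delta_G$, deleting it leaves a crossing-free drawing. I would argue by contradiction, supposing that for some left component $C$, some $xy$-path $P$ in $C$, and some $z \in C$ on the left arc strictly between $x$ and $y$ having a neighbor $w$ in a right component, we have $z \notin V(P)$.

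First I would form a simple closed curve $\gamma$ inside the closed unit disk by concatenating the drawing of $P$ (a polygonal chord arc) with the subarc $A$ of $\mathcal{O}$ from $x$ to $y$ that passes through $z$. Simplicity uses three facts: $P$ is a simple path; no internal vertex of $P$ coincides with $z$ by the contradiction hypothesis; and the chord interiors of $P$ meet $\mathcal{O}$ only at $x$ and $y$. By the Jordan curve theorem, $\gamma$ partitions the closed disk into two regions $R_1$ and $R_2$, where $R_1$ has $A$ as its boundary arc on $\mathcal{O}$ and $R_2$ has the complementary arc, namely the one running from $x$ to $y$ through the right side of $uv$.

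Next, since $w$ belongs to a right component, it lies on the right side of $uv$ and hence on the complementary arc, so $w \in \overline{R_2}$; on the other hand $z$ lies on $A \subseteq \partial R_1$. The straight-line edge $zw$ must therefore cross $\gamma$ to travel from $z$ to $w$, and since it cannot cross $\mathcal{O}$ (both endpoints lie on the circle and the chord stays inside the closed disk), it must cross the polygonal part $P$ of $\gamma$. This yields a crossing between $zw$ and some edge of $P$ in $\delta_{G'}$, contradicting the planarity of $\delta_{G'}$.

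The main subtlety I anticipate is handling mild degeneracies: if $w \in \{u,v\}$, the vertex still lies at the shared endpoint of both arcs and in particular on the complementary arc, so $w \in \overline{R_2}$ still holds; and a standard general-position assumption on the circular drawing rules out $zw$ accidentally passing through an internal vertex of $P$. The right-component case follows by exchanging the roles of left and right throughout.
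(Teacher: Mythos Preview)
The paper states this as an observation without supplying a proof, so there is no reference argument to compare against; your Jordan-curve strategy is a natural way to justify it. However, your simplicity claim for $\gamma$ has a genuine gap. You note that the chord \emph{interiors} of $P$ meet $\mathcal{O}$ only at $x$ and $y$, but the \emph{vertices} of $P$ all lie on $\mathcal{O}$, and nothing prevents an internal vertex $p_i$ of $P$ from lying on the arc $A$. Indeed, every internal vertex of $P$ belongs to the left component $C$ and hence sits on the left arc; any of them that happen to fall between $x$ and $y$ on that arc lie on $A$, and then $\gamma = P \cup A$ passes through such a point twice and is not simple. Your second fact (``no internal vertex of $P$ coincides with $z$'') rules out only the single point $z$, not the whole arc.

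The repair is straightforward and your overall plan survives. One option: replace $A$ by a simple arc $\tilde{A}$ from $x$ to $y$ running just outside $\mathcal{O}$ along $A$; then $\gamma = P \cup \tilde{A}$ is genuinely simple, $z$ (on $A$, not on $P$) lies in the bounded region, $w$ (on the complementary arc $A'$, which contains the entire right side) lies in the unbounded region, and the chord $zw$ must cross $P$. Alternatively, bypass Jordan entirely with a combinatorial crossing argument: since $z$ lies strictly between $x$ and $y$ on the left arc while $w$ lies on the right arc, the two $zw$-arcs of $\mathcal{O}$ separate $x$ from $y$; walking along $P=p_0p_1\cdots p_k$ from $x$ to $y$, some consecutive pair $p_i,p_{i+1}$ must lie on different $zw$-arcs (neither $z$ nor $w$ is a vertex of $P$), so $p_ip_{i+1}$ alternates with $zw$ and the two chords cross in $\delta_{G'}$, contradicting its planarity.
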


 \begin{figure}[tb]
    \centering
\includegraphics[page=1]{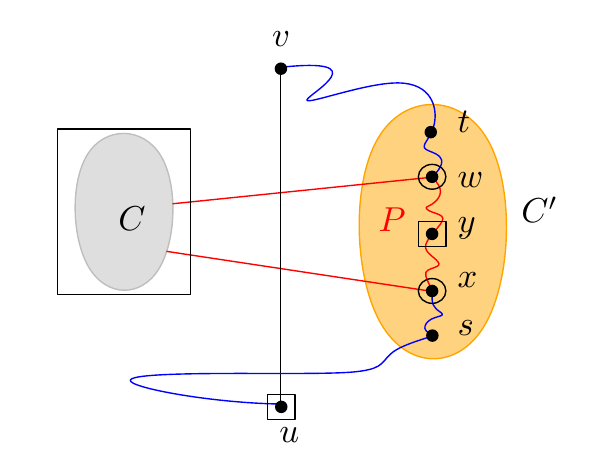}
\caption{The $K_{2,3}$-minors we use in the proof of Lemma~\ref{lem:non-connecting-component}}
\label{fig: minors_a}
\end{figure}

\begin{lemma}
 \label{lem:non-connecting-component}
Every non-connecting component $C$ of $G'-X$ is adjacent to exactly one component $C'$ of $G'-X$. Moreover, $C'$ is connecting, there are at most two vertices in $C'$ that are incident to edges in $E(C, C')$, and if there are two such vertices $w,x\in C'$, then they are adjacent and removing $wx$ disconnects $C'$.
\end{lemma}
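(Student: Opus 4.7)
All four assertions will be proved by contradiction; in each case I exhibit either a $K_{2,3}$- or a $K_4$-minor of~$G$, contradicting outerplanarity. Three ingredients recur throughout: the edge~$uv$ of~$G$; the connectedness of~$C$ and of any candidate neighbor of~$C$; and a $uv$-path~$P$ inside $G' - V(C)$, which exists because $C$ is non-connecting and~$G'$ is connected. The minor configurations that arise are exactly those depicted in Figure~\ref{fig: minors_a}.

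For claim~(i), I suppose for contradiction that $C$ is adjacent to two distinct components $D_1, D_2$. Since each component of $G'-X$ lies on one side of~$uv$ and adjacent components lie on opposite sides, both $D_1$ and $D_2$ sit opposite to~$C$. By combining fragments of~$P$, spanning trees of $C, D_1, D_2$, the crossing edges between them, and the edge~$uv$, I assemble five pairwise vertex-disjoint connected subgraphs $X_1, X_2, Y_1, Y_2, Y_3$ of~$G$ with $X_1 \ni u$, $X_2 \ni v$, $Y_1 \subseteq V(C)$, $Y_2$ routed through~$V(D_1)$, and $Y_3$ routed through~$V(D_2)$; each~$Y_j$ is adjacent to both~$X_1$ and~$X_2$, so the collection forms a $K_{2,3}$-minor of~$G$. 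Claim~(ii) is then short: if the unique neighbor~$N$ of~$C$ were also non-connecting, applying claim~(i) to~$N$ would force~$N$ to have only one neighbor, namely~$C$; but then $V(C) \cup V(N)$ would be cut off from the rest of~$G'$ by $X$-edges that stay internal to this pair, contradicting connectedness of~$G'$.

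For claim~(iii), suppose~$C'$ contains three distinct vertices $w_1, w_2, w_3$ each adjacent to a vertex $c_i \in V(C)$. Connectedness of~$C$ and of~$C'$ yields Steiner trees $T_C \subseteq C$ spanning $\{c_1, c_2, c_3\}$ and $T_{C'} \subseteq C'$ spanning $\{w_1, w_2, w_3\}$. The graph $T_C \cup T_{C'} \cup \{c_i w_i \mid i = 1, 2, 3\}$ contains a $K_{2,3}$-subdivision whenever the two Steiner centres are proper internal vertices; in the degenerate cases where a Steiner centre coincides with some~$c_i$ or~$w_i$, the same three hub-to-hub paths, enlarged by a fragment of~$P$ and the edge~$uv$, still yield a $K_{2,3}$-minor. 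Claim~(iv) follows the same template: a $wx$-path inside~$C'$ and a $wx$-detour through~$C$ already give two internally disjoint $wx$-paths, and if $w, x$ were non-adjacent, or if $wx$ were not a bridge of~$C'$, a third internally disjoint $wx$-path would appear; combined with~$uv$ and a piece of~$P$, this again produces a $K_{2,3}$-minor.

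The hard part is claim~(i): verifying that the branch sets $X_1, X_2, Y_1, Y_2, Y_3$ are genuinely pairwise vertex-disjoint, and that each~$Y_j$ is in fact adjacent to both hubs, requires a case analysis. The admissible slicing of~$P$ into fragments depends on which side of~$uv$ contains~$u$ and~$v$ (a convention fixed by the cut-vertices~$f, l$) and on whether~$P$ itself passes through~$D_1$ or~$D_2$. The sub-configurations shown in Figure~\ref{fig: minors_a} correspond precisely to these case distinctions, and each one has to be inspected separately to confirm that the resulting minor really is~$K_{2,3}$ rather than only~$K_{2,2}$.
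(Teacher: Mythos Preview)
Your plan for claim~(i) has a structural problem that the case analysis cannot repair. You want $Y_1 \subseteq V(C)$ to be adjacent to both $X_1 \ni u$ and $X_2 \ni v$, but every edge leaving $C$ in $G$ is an $X$-edge to a right component, hence lands in $D_1 \cup D_2$. So $X_1$ and $X_2$ must each absorb a vertex of $D_1 \cup D_2$; once they do, you no longer have room for disjoint branch sets $Y_2 \subseteq D_1$ and $Y_3 \subseteq D_2$ that are \emph{also} adjacent to both hubs. A workable $K_{2,3}$ exists with the hubs taken as $D_1$ and $D_2$ themselves (then $Y_1=C$, $Y_2=$ the two ends of $P$ glued by $uv$, $Y_3=$ the middle of $P$), but this requires $P$ to pass through both $D_1$ and $D_2$, which is only guaranteed once you know they are connecting---your claim~(ii). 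Since your proof of~(ii) invokes~(i), the argument is circular.

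The paper sidesteps this entirely. It first uses the short observation preceding the lemma (for any $X$-edge, at least one endpoint lies in a connecting component) to conclude that every neighbor of $C$ is connecting. Uniqueness of the neighbor is then a two-line connectivity argument, not a minor argument: two distinct connecting right neighbors would either be joined by a right-side path (so not distinct) or be separated by removing $C$ (so $C$ would be connecting). For the ``at most two attaching vertices'' and ``$wx$ is a bridge'' parts, the paper does use a $K_{2,3}$-minor, but the crucial ingredient you are missing is Observation~\ref{obs:no-skip-path}: by planarity of $\delta_{G'}$, any $wx$-path inside $C'$ must visit every vertex of $C'$ between $w$ and $x$ that is adjacent to $C$. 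This kills both the ``three attaching vertices'' scenario and the ``$wx$ is not an edge / not a bridge'' scenario in one stroke, without the degenerate Steiner-tree cases you flag. Your Steiner-tree route can fail in exactly those degenerate cases (e.g.\ a single vertex $c$ adjacent to $w_1,w_2,w_3$ with $w_2$ on the $w_1w_3$-path in $C'$ yields only $K_4-e$, which is outerplanar), and the promised rescue via ``a fragment of $P$ and the edge $uv$'' is not spelled out. I recommend adopting the paper's order---connecting first, uniqueness second---and using Observation~\ref{obs:no-skip-path} for the remaining claims.
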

\begin{proof}
Without loss of generality, we assume that $C$ is a left component.  
Since $C$ is non-connecting, any component adjacent to it must be connecting.  Moreover, if there are two distinct such components, they lie on the right side of the edge $uv$. 
Then either there is a path on the right side that connects them (but then they are not distinct), or removing $C$ disconnects these components, and therefore $uv$, contradicting the assumption that $C$ is a non-connecting component.
Therefore $C$ is adjacent to exactly one other component $C'$, which must be a right connecting component.

Let $w$ and $x$ be the first and the last vertex in $C'$ that are adjacent to vertices in $C$ when sweeping the vertices of $G$ clockwise in $\delta_G$ starting at $v$; see Figure~\ref{fig: minors_a}.
The lemma holds trivially if $w=x$.  
Suppose $w \neq x$. 
Next we show that $wx \in E$ and that $wx$ is a bridge of $C$.
Let $P$ be an arbitrary path from $w$ to $x$ in $C'$.  If $P$ contains an internal vertex $y$, then the path $P$ together with a path from $w$ to $x$ whose internal vertices lie in $C$ forms a cycle, where $x$ and $w$ are not consecutive.
Note that at least one of $u,v$, say $u$, is not identical to $w, x$, otherwise, $u,v$ are 2-connected.
This cycle, together with disjoint paths from $w$ to $v$ and $x$ to $u$ and the edge $uv$ yields a $K_{2,3}$-minor in $G$; see Figure~\ref{fig: minors_a}.
Such paths exist, by the outerplanarity of $\delta_{G'}$ and the fact that $C'$ is connecting, but $C$ is not.
Since $G$ is outerplanar, and therefore cannot contain a $K_{2,3}$-minor, this immediately implies that $P$ consists of the single edge $wx$, which must be a bridge of $C'$ as otherwise there would be a $wx$-path with an internal vertex. Observation~\ref{obs:no-skip-path} implies that $w$ and $x$ are the only vertices of $C$ that are adjacent to vertices in $C'$.
\end{proof}

\begin{proposition}
\label{prop:thncc}
Let $C$ be a left (right) non-connecting component of $G'-X$.  It is always possible to obtain a new almost-planar drawing $\delta'_G$ of $G$ from~$\delta_G$ by moving only the vertices of $C \setminus \{u,v\}$ to the right (left) side.
\end{proposition}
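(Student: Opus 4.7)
The plan is to invoke Lemma~\ref{lem:non-connecting-component} to pin down how $C$ attaches to the rest of $G'$, and then relocate $C$ to a single small arc on the opposite side of $uv$ whose placement is justified by the outerplanarity of the subgraph carried along. Without loss of generality assume $C$ is a left non-connecting component. By the lemma, $C$ is adjacent to a unique other component $C'$ of $G'-X$, which is a right connecting component, and the edges of $E(C,C')$ share either (Case~I) a single endpoint $w \in C'$, or (Case~II) exactly two endpoints $w,x \in C'$ with $wx$ a bridge of $C'$.

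In Case~I I would look at the subgraph $H$ of $G$ induced by $V(C) \cup \{w\}$ together with all edges of $G$ among these vertices. As a subgraph of an outerplanar graph, $H$ is outerplanar; moreover $w$ separates $C$ from everything else in $G'$, so any planar circular ordering $\sigma_H$ of $H$ is compatible with the rest of the drawing regardless of where in the cyclic order $w$ sits. I would construct $\delta'_G$ by deleting the vertices of $C$ from their current positions on the left arc and reinserting them, in the order induced by $\sigma_H$, into a small empty arc of $\mathcal{O}$ immediately clockwise of $w$ on the right side.

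In Case~II I would similarly work with the subgraph $H'$ of $G$ induced by $V(C) \cup \{w,x\}$ and place all of $C$ into an arc between $w$ and $x$ on the right side. By Observation~\ref{obs:no-skip-path}, one of the two open arcs from $w$ to $x$ on the right side contains no vertex of $C'$ adjacent to $C$. Since $wx$ is a bridge of $C'$ and $H'$ is outerplanar, $H'$ admits a planar circular ordering in which all of $V(C)$ lies on a single arc between $w$ and $x$; I would insert $C$'s vertices into the chosen empty arc in that order.

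Finally I would verify that $\delta'_G$ is almost-planar. The verification has three ingredients: (i) the internal edges of $C$ together with the edges in $E(C,C')$ form a planar circular drawing of $H$ (resp.\ $H'$) inside the chosen arc, by our choice of $\sigma_H$ (resp.\ $\sigma_{H'}$); (ii) the edge $uv$ no longer crosses any edge incident to a vertex of $C$, because $C$ now lies entirely on the right side; and (iii) chords whose endpoints all lie outside $V(C)$ are unchanged and hence remain crossing-free. The step I expect to be the main obstacle is the verification in Case~II that the chosen arc is truly free of interfering chords, i.e.\ that no edge of $G'$ with both endpoints outside $V(C) \cup \{w,x\}$ can be forced through this arc and cross a newly inserted $C$-edge; this is exactly where the bridge property of $wx$ combined with Observation~\ref{obs:no-skip-path} and the outerplanarity of $G$ (which rules out the relevant $K_{2,3}$- and $K_4$-minors) become essential.
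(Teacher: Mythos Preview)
Your overall strategy---invoke Lemma~\ref{lem:non-connecting-component} and relocate $C$ into a narrow arc on the opposite side---matches the paper's. The execution differs in two respects, one cosmetic and one substantive.

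First, the cosmetic difference: where you compute a fresh planar circular ordering of $H$ (or $H'$) and insert $C$ in that order, the paper simply takes the \emph{existing} clockwise order of $C$ in $\delta_G$, \emph{reverses} it, and inserts the reversed block. This works because $\delta_G$ restricted to $V(C)\cup\{w\}$ (or $\cup\{w,x\}$) is already planar, and reflecting a left block across $uv$ to the right side is exactly an order reversal. Your approach is correct but does more work than needed.

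Second, the substantive gap is in your Case~II. You speak of inserting $C$ ``into the chosen empty arc'' between $w$ and $x$, but the arc from $w$ to $x$ on the right side is \emph{not} empty in general: it contains further vertices of $C'$ (and possibly of other components). The point is that edges from $C$ go to \emph{both} $w$ and $x$, so a naive insertion anywhere between them can create crossings with internal edges of $C'$. The paper resolves this precisely: since $wx$ is a bridge of $C'$, removing it splits $C'$ into a $w$-side and an $x$-side; the paper lets $y$ be the \emph{last} vertex of the $w$-side encountered clockwise from $w$ toward $x$, and inserts (the reversal of) $C$ immediately after $y$. That choice of $y$ is exactly what guarantees that the $C$-to-$w$ edges stay on the $w$-side of the bridge and the $C$-to-$x$ edges stay on the $x$-side, so no new crossings with $C'$ arise. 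Your proposal acknowledges this is ``the main obstacle'' but does not supply the insertion point; without it the argument is incomplete. Your phrase ``one of the two open arcs from $w$ to $x$ on the right side'' is also confused---there is only one such arc on the right side, and the relevant emptiness statement (no vertex of $C'$ adjacent to $C$ other than $w,x$) is already part of Lemma~\ref{lem:non-connecting-component}, not something extra you get from Observation~\ref{obs:no-skip-path}.
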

\begin{proof}
Without loss of generality, we assume that $C$ is a left component.
Since $C$ is non-connecting, then by Lemma~\ref{lem:non-connecting-component}, it is adjacent to at most two vertices on the right side. If there are two such vertices, denote them by $w$ and $x$ such that~$w$ occurs before $x$ on a clockwise traversal from~$v$ to $u$.  Note that $wx$ is a bridge of a right component $C'$ by Lemma~\ref{lem:non-connecting-component}; see Figure~\ref{fig:sect2-1_b}. 
Consider the two components of $C'-wx$ and let $y$ be the last vertex that lies in the same component as $w$ when traversing vertices clockwise from $w$ to $x$.  If~$C$ is connected to only one vertex, then we denote this by~$y$.  In both cases, let~$y'$ be the vertex of $L$ that immediately succeeds~$y$ in clockwise direction (If $y=u$, let $y'$ be the vertex that immediately precedes $y$.).

We obtain $\delta_G'$ by moving all vertices of $C \setminus \{u,v\}$ between $y$ and $y'$, reversing their clockwise ordering.  
Observe that the choice of $y$ and $y'$ guarantees that $\delta_G'$ is almost-planar and all crossings lie on $uv$.
\end{proof}
It remains to deal with connecting components.
 \begin{figure}[htb]
        \centering
\includegraphics[page=2]{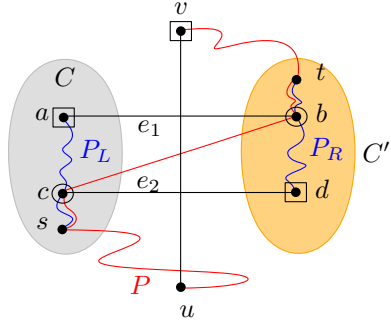}
\caption{The $K_{2,3}$-minor we use in the proof of Lemma~\ref{lem:connecting-component}}
\label{fig: minors_b}
\end{figure}

\begin{lemma}
  \label{lem:connecting-component}
The connecting 
component of $G'-X$ containing $u$ or $v$ 
is adjacent to at most one connecting component.  Every other connecting component is adjacent to exactly two connecting components. 
Moreover, if $C$ and $C'$ are two adjacent connecting components,  then there is a vertex $w$ that is incident to all edges in $E(C,C')$.
\end{lemma}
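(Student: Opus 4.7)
The plan is to exploit outerplanarity of $G$, which forbids a $K_{2,3}$-minor, in the same spirit as the proof of Lemma~\ref{lem:non-connecting-component}. The key structural observation to record first is that, since $u$ and $v$ are connected but not $2$-connected in $G'$, every $u$-$v$ path in $G'$ threads through the same ordered sequence of cut-vertices $f=c_1,\dots,c_k=l$, so the connecting components of $G'-X$ form a linear chain from the component containing $u$ to the one containing $v$, with the non-connecting components hanging off this chain as pendants (which follows from Lemma~\ref{lem:non-connecting-component}).

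First I would establish the common-endpoint claim. Let $C$ and $C'$ be two adjacent connecting components, and assume for contradiction that $E(C,C')$ contains two edges $w_1 x_1$ and $w_2 x_2$ with $w_1\neq w_2$ in $C$ and $x_1\neq x_2$ in $C'$. Pick a path $P_C\subseteq C$ from $w_1$ to $w_2$ and a path $P_{C'}\subseteq C'$ from $x_1$ to $x_2$; together with the two crossing edges they form a cycle $Q$. Using the chain structure, I would then select a path $Q_u$ from some vertex of $C$ to $u$ that avoids $C'$, together with an internally vertex-disjoint path $Q_v$ from some vertex of $C'$ to $v$ that avoids $C$ (one of $Q_u, Q_v$ degenerates to a single vertex when $u\in C$ or $v\in C'$). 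Contracting $P_C$, $P_{C'}$, $Q_u\cup\{u\}$, and $Q_v\cup\{v\}$ into branch vertices and combining with the edge $uv$ exhibits the $K_{2,3}$-minor depicted in Figure~\ref{fig: minors_b}, contradicting outerplanarity. Hence all edges of $E(C,C')$ share a common endpoint $w$.

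For the adjacency count I would argue along similar lines. A connecting component $C$ that contains neither $u$ nor $v$ separates $u$ from $v$ in $G'$, so it must have at least one connecting neighbor on each side of the chain, giving at least two. If it had a third connecting neighbor, two of its connecting neighbors would lie on the same side of $C$, offering two internally vertex-disjoint routes from $C$ to $\{u,v\}$ through connecting components; combined with a suitable path inside $C$ and the edge $uv$, this again forces a forbidden $K_{2,3}$-minor, so the count is exactly two. The component containing $u$ (respectively $v$) sits at an end of the chain, hence has at most one connecting neighbor. The main obstacle I anticipate is making sure that the auxiliary paths $Q_u$ and $Q_v$ really are internally vertex-disjoint from $Q$ and from each other, and that the branch sets of the $K_{2,3}$-minor stay pairwise disjoint in corner cases where $u$ or $v$ is itself one of the endpoints $w_i$ or $x_i$, or where non-connecting pendants crowd $C$ or $C'$; the linear chain structure together with Lemma~\ref{lem:non-connecting-component} is what ultimately makes these detours well-defined and disjoint.
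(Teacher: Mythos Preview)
Your overall strategy matches the paper's: derive the adjacency counts from the chain structure of the connecting components, and obtain a $K_{2,3}$-minor contradiction for the shared-endpoint claim. The first part is fine; the paper in fact dispatches it in a single sentence (``every $uv$-path visits all connecting components in the same order''), so your extra $K_{2,3}$ argument for ``at most two'' is redundant once you already have the chain.

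The gap is in your $K_{2,3}$ construction. You name four branch sets ($P_C$, $P_{C'}$, $Q_u\cup\{u\}$, $Q_v\cup\{v\}$), but $K_{2,3}$ has five vertices; contracting each of your sets to a point and adding the edge $uv$ yields at best a four-vertex multigraph (a $4$-cycle with one doubled edge), which carries no $K_{2,3}$-minor. Read instead as a theta-graph argument---the cycle $Q$ plus a $Q$-path through $u$ and $v$---you would still need (i) the two attachment points to be non-adjacent along $Q$ (otherwise you only obtain two cycles sharing an edge, which is outerplanar) and (ii) the $Q$-path to be internally disjoint from $Q$. You flag the disjointness issue but resolve neither.

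This is exactly where the paper's proof diverges from yours: it exploits the planar drawing $\delta_{G'}$, via Observation~\ref{obs:no-skip-path}, rather than a purely graph-theoretic detour. Fixing the clockwise order $a,b,d,c$ of the four endpoints, that observation forces any path $P_L\subseteq C$ from the entry vertex $s$ to $a$ to pass through $c$, and any path $P_R\subseteq C'$ from $d$ to the exit vertex $t$ to pass through $b$. This pins down five explicit branch vertices $a,b,c,d,v$ and four concrete subpaths whose contraction yields exactly the edge set of $K_{2,3}$ (with parts $\{b,c\}$ and $\{a,d,v\}$). Without invoking the drawing you have no substitute for this step, and the chain structure by itself does not control where $Q_u$ and $Q_v$ meet $Q$.
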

\begin{proof}
The claims concerning the adjacencies of the connecting components follows from the fact that every $uv$-path visits all connecting components in the same order.  
It remains to prove that all edges between two connecting components share a single vertex.
If $u$ and $v$ are in one component, then this component is the only connecting component and there is nothing to show.  

Now let $C$ and $C'$ be adjacent connecting components.
We assume without loss of generality, that $C$ is a left and $C'$ is a right component.
For the sake of contradiction, assume there exist two edges $e_1,e_2 \in E(C,C')$ that do not share an endpoint.  
Let $e_1 = ab$ and $e_2=cd$ where $a,c \in C$ and $b,d \in C'$ such that their clockwise order is $a,b,d,c$; see Figure~\ref{fig: minors_b}.
Note that one of $u,v$ is not in the set $\{a,b,c,d\}$.
Otherwise, $u$ and $v$ are 2-connected, which contradicts our case assumption. In the following, we assume without loss of generality that $a,b,c,d,v$ are five distinct vertices in~$G'$.   
Let $P$ be a path from $u$ to $v$ in $G'$. 
Since $C$ and $C'$ are both connecting, $P$ contains vertices from both components.  
When traversing $P$ from $u$ to $v$, let $s$ and $t$ denote the first and the last vertex of $C \cup C'$ that is encountered, respectively.
Here, we assume without loss of generality that $s \in C$ and $t \in C'$.
Let $P_L$ be a path in $C$ that connects $s$ to $a$ and let $P_R$ be a path in $C'$ that connects $d$ to $t$. 
By Observation~\ref{obs:no-skip-path}, $P_L$ contains $c$ and $P_R$ contains $b$. 
We then obtain a $K_{2,3}$-minor of $G$ by contracting each of the paths $P_L[c,a]$, $P_R[d,b]$, $vuP[u,s]P_L[s,c]$, and $P_R[b,t]P[t,v]$ into a single edge.
\end{proof}

By Lemma~\ref{lem:non-connecting-component} and Lemma~\ref{lem:connecting-component}, all vertices of a connecting component of $G'-X$ can be moved to the other side, similarly as in Proposition~\ref{prop:thncc}.
\begin{proposition}
\label{prop:thcc}
Let $C$ be a left (right) connecting component of $G'-X$.  It is possible to obtain a new almost-planar drawing $\delta'_G$ of $G$ from~$\delta_G$ by moving only the vertices of $C \setminus \{u,v\}$ to the right (left) side.
\end{proposition}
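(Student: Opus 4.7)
The plan is to mirror the construction of Proposition~\ref{prop:thncc}: identify where on the right side of $uv$ the vertices of $C \setminus \{u,v\}$ should be inserted, then show that placing them there yields an almost-planar drawing. Without loss of generality, assume $C$ is a left connecting component.

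I would first characterize $C$'s attachments to the right side using the structural lemmas. By Lemma~\ref{lem:connecting-component}, $C$ is adjacent to at most two connecting right components $D_1, D_2$, and the edges in each $E(C, D_i)$ all share a single vertex $w_i$. By Lemma~\ref{lem:non-connecting-component}, every non-connecting right component adjacent to $C$ attaches to $C$ through either one vertex or through a bridge of $C$. Outerplanarity of $G$ further forces the interface vertices of $C$ to appear along $C$ in the same cyclic order as their counterparts appear along the right arc of $\delta_G$; a mismatched order would yield a $K_{2,3}$-minor by the same kind of argument used in the proofs of the earlier lemmas.

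Next, I would choose the target arc. Let $y_1$ be the clockwise-last vertex of $D_1$ and $y_2$ the clockwise-first vertex of $D_2$ encountered as we sweep the right side from $v$ to $u$. Insert the vertices of $C \setminus \{u,v\}$ in the arc between $y_1$ and $y_2$, preserving (or reversing) $C$'s cyclic order to match the right-side orientation. Non-connecting right components attached to $C$ stay in place, and $C$ is threaded between them at the appropriate bridge or single-vertex attachment points, exactly as in Proposition~\ref{prop:thncc}. Degenerate cases (only one of $D_1, D_2$ exists, $C$ has no connecting-component neighbor, or $C$ contains $u$ or $v$) are handled by using just one endpoint $y_i$, or the immediate predecessor/successor of $u$ or $v$ on the right arc, as the insertion anchor.

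Finally, I would verify that $\delta'_G$ is almost-planar. Edges internal to $C$ remain crossing-free because $C$ is outerplanar and we preserve its cyclic order. Every edge in $E(C, D_i)$ is incident to $w_i$, which is placed adjacent to $y_i$, so these edges do not cross. Edges to non-connecting right components are routed through their attachment vertices, which now lie next to those components. Any remaining crossings involve only $uv$. The main obstacle is the case analysis when $C$ contains $u$ or $v$: the fixed endpoint stays in its original position while the rest of $C$ is moved to the right, so the edges incident to this endpoint must be routed carefully to avoid introducing new crossings off of $uv$. Working out this boundary case and confirming that the outerplanarity-induced order on the interface vertices is indeed compatible with the target arc is the most delicate part of the argument.
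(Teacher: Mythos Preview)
Your approach is broadly correct and parallels the paper's in identifying the anchors on the right side via Lemma~\ref{lem:connecting-component}, but the paper uses a cleaner device that sidesteps the ``threading'' verification you flag as delicate. Instead of directly interleaving $C$ with the non-connecting right components, the paper runs a two-phase construction. In phase~1, it applies Proposition~\ref{prop:thncc} to each non-connecting right component adjacent to $C$, pulling those components temporarily to the \emph{left} so they merge into $C$. After this, the enlarged left block is adjacent on the right only to the (at most two) connecting components, each through a single shared vertex by Lemma~\ref{lem:connecting-component}. In phase~2, the entire enlarged block $C\setminus\{u,v\}$ is moved to the right between the anchors $w,w'$ with its cyclic order reversed. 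Because the non-connecting components undergo two reversals, they land back in their original positions and order, so the net effect moves only the vertices of $C$. This trick reuses the already-proven Proposition~\ref{prop:thncc} and reduces the planarity check to the single-vertex-attachment case, whereas your direct route requires the interface-order compatibility claim (that the attachment vertices of $C$ appear in the same cyclic order as their right-side counterparts) to be proved separately. Also, the case where $C$ contains $u$ or $v$ is less troublesome than you suggest: the paper simply sets the corresponding anchor to $v$ or $u$ itself, since those vertices are never moved and sit at the boundary of the right arc.
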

\begin{proof}
We assume without loss of generality that $C$ is a left connecting component. 
Now, we determine two vertices $w$ and $w'$ such that a right component is a non-connecting component adjacent to $C$ iff it lies between $w$ and $w'$ entirely.
If $u,v$ are not in $C$, by Lemma~\ref{lem:connecting-component}, $C$ is adjacent to exactly two right connecting components $C'$, $C''$(see~Figure~\ref{fig:one_side_move_a}). 
In the following, we assume that $v, C', C'', u$ are in clockwise order. 
Let $w$ be the last vertex in $C'$ and $w'$ be the first vertex in $C''$ when traversing the vertices in $\delta_G$ clockwise from $v$.
If $C$ contains both $u$ and $v$, let $w$ be $v$ and $w'$ be $u$. If $C$ contains either $u$ or $v$, by Lemma~\ref{lem:connecting-component}, $C$ is adjacent to exactly one right connecting components $C'$.  
Assume without loss of generality that $v \in C$. 
Let $w$ be the last vertex in $C'$ when traversing the vertices in $\delta_G$ clockwise and $w'$ be $u$. 
Observe that, due to the connectivity of $G'$ and the outerplanarity of $\delta_{G'}$, each right component that entirely lies between $w$ and let $w'$ is a non-connecting component adjacent to $C$.
Again, we want to only move the component $C$ to the right side between $w$ and $w'$ without introducing any crossings.

For simplicity, we describe the procedure in two phases.
In the first phase, we move all the right non-connecting components connected to $C$ to the left side ``temporarily'' by the procedure described in the proof of Proposition~\ref{prop:thncc} such that the components are merged into $C$ on the left; see Figure~\ref{fig:one_side_move_b}.
In the second phase, we move the component $C\setminus\{u,v\}$ (alongside the vertices that are moved in the first phase) to the right side between $w$ and $w'$, reversing their clockwise ordering; see Figure~\ref{fig:one_side_move_c}.
For each right component $C'$ that is adjacent to $C$, by Lemma~\ref{lem:connecting-component}, there is exactly one vertex shared by edges $E(C, C')$.
Thus, there is no crossing on the right side of $uv$ after the second phase. Furthermore, the vertices moved to the left at the first phase are in the same order as in $\delta_G$ after two reversals and they still lie between $w$ and $w'$.
Therefore, we can reach the same order after this two-phase procedure by only moving the vertices in $C$ to the right side accordingly. 
\end{proof}

\begin{figure}[tb]
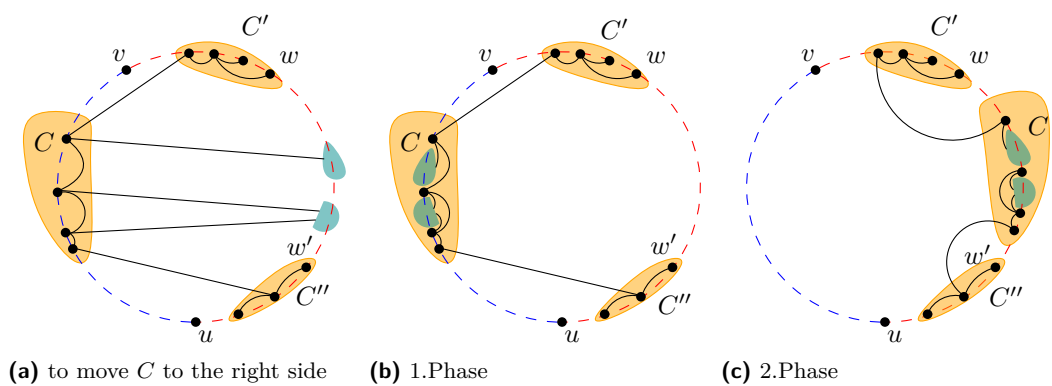

    \centering
	    \begin{subfigure}[t]{0.33\textwidth}
	        \centering
	            \includegraphics[page=4]
		{figs/section2.pdf}
	\caption{to move $C$ to the right side}
	\label{fig:one_side_move_a}
	    \end{subfigure}
		\hfill
    \begin{subfigure}[t]{0.33\textwidth}
        \centering
\includegraphics[page=5]
		{figs/section2.pdf}
    \caption{1.Phase}
        \label{fig:one_side_move_b}
    \end{subfigure}%
		\hfill
		\begin{subfigure}[t]{0.33\textwidth}
        \centering
\includegraphics[page=6]
		{figs/section2.pdf}
    \caption{2.Phase}
        \label{fig:one_side_move_c}
    \end{subfigure}%
        \caption{Case 2.2 (connecting-component)}
\label{fig:one_side_move}
\end{figure}

Proposition~\ref{prop:thncc} and Proposition~\ref{prop:thcc} together imply Theorem~\ref{th:one_side_move}.

In the last part of this section, we consider optimal untangling under the restriction that the positions of $u$ and $v$ are fixed.
We denote such untangling as \textit{edge-fixed untangling}.
\begin{theorem}
\label{theoremFixShifting}
Given an almost-planar drawing $\delta_{G}$ of an outerplanar graph $G$, an edge-fixed untangling of $\delta_G$ with the minimum number of vertex moves can be computed in linear time. 
\end{theorem}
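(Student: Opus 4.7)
My plan is to reduce the edge-fixed minimum untangling problem to a collection of independent binary decisions, one per connected component of $G' - \{u,v\}$, each solvable greedily in constant time per component, so that the whole procedure runs in a single traversal.

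First, I establish a structural reduction. In any planar circular drawing of $G$ with $u$ and $v$ at their original positions, the chord $uv$ must carry no crossings, so every edge $ab$ of $G' = G - uv$ with $a, b \notin \{u, v\}$ has both endpoints on the same closed side of $uv$. By transitivity along edges, each connected component~$K$ of $G' - \{u, v\}$ must lie entirely on one side in the output drawing. This yields the lower bound: if $K_1, \ldots, K_m$ are the components of $G' - \{u, v\}$ and $\ell_i = |K_i \cap L|$, $r_i = |K_i \cap R|$, then any edge-fixed untangling must use at least $\min(\ell_i, r_i)$ moves on $K_i$, so at least $\sum_{i=1}^m \min(\ell_i, r_i)$ moves in total.

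Next, I would match this bound constructively. For each $K_i$, I place it on the side containing the majority of its vertices and relocate the minority using the structural propositions proved in this section: Proposition~\ref{prop:thcc1} when $u,v$ are disconnected in $G'$ (Case~1), Propositions~\ref{prop:thncc} and~\ref{prop:thcc} when $u,v$ are connected but not $2$-connected (Case~2.2), and no action at all when $u,v$ are $2$-connected (since $\delta_G$ is already planar by Proposition~\ref{prop:Biconected}). Each such proposition relocates only the vertices of a single component of $G'-X$ (or, in Case~1, of $G'$) and preserves the positions of the rest, so iterating over all components of $G'-\{u,v\}$ composes into a single edge-fixed untangling whose move count matches the lower bound exactly.

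The algorithmic side is then immediate: a single BFS/DFS over $G'-\{u,v\}$ identifies the components and tallies $\ell_i$ and $r_i$ on the fly, after which I sum the $\min(\ell_i, r_i)$ and emit the explicit move list prescribed by the propositions, all in $O(|V|+|E|)$ time.

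The main obstacle I expect is verifying that the successive applications of the structural propositions compose without reintroducing crossings. I would handle this by induction on the number of components already processed, showing that after relocating one component the drawing remains almost-planar with the remaining crossings strictly confined to edges incident to components not yet processed; consequently the hypothesis of the next applicable proposition still holds, and the final composition produces a valid planar circular drawing realizing the predicted minimum.
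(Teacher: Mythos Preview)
Your lower bound via components of $G'-\{u,v\}$ is correct and is in fact finer than the paper's, which decomposes by components of $G'$; the paper's coarser bound can overcount when $u$ or $v$ is a cut-vertex of its $G'$-component. However, your matching construction has a real gap: Propositions~\ref{prop:thcc1}, \ref{prop:thncc}, and~\ref{prop:thcc} relocate either $C\cap L$ (or $C\cap R$) for an \emph{entire} connected component $C$ of $G'$, or an \emph{entire} left/right component of $G'-X$; none of them moves just the minority side of a single component $K$ of $G'-\{u,v\}$ when $K$ sits inside a larger $G'$-component that also contains $u$ or $v$. Concretely, if removing $u$ from its $G'$-component yields two pieces $K_1,K_2$ with opposite majority sides, your plan requires shifting $K_1\cap R$ leftwards while shifting $K_2\cap L$ rightwards, but Proposition~\ref{prop:thcc1} only offers the all-or-nothing choice of moving the whole $C\cap L$ or the whole $C\cap R$.

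The repair is not deep: one can prove an analogue of Proposition~\ref{prop:thcc1} for components $K$ of $G'-\{u,v\}$ directly, exploiting that every edge leaving $K$ in $G'$ is incident to $u$ or to $v$ and therefore never crosses the chord $uv$, so $K\cap L$ can be slotted next to $K\cap R$ much as in the original proof. But this is a genuine missing step, and your final paragraph---which worries only about \emph{composing} successive applications of the existing propositions---does not address it, since the mismatch between your decomposition and the propositions' hypotheses already shows up for a single $K_i$ before any composition takes place.
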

\begin{proof}
Let $e=uv$ be the crossed edge of $\delta_G$, let $C$ be a connected component of $G' = G-e$, and let $L_C = L \cap C$ and  $R_C = R \cap C$. 
Note that every edge-fixed untangling  must either move $L_C$ entirely to the right or $R_C$ entirely to the left of edge $e$.
Thus, any edge-fixed untangling must move $\min\{|L_C|, |R_C|\}$ vertices in each component $C$.

It remains to prove that we can compute such a move sequence with the minimum number of required vertex moves for each component $C$.
If $u, v$ are not connected in $G'$, the claim is exactly the same as Proposition~\ref{prop:thcc1}.
We now consider the case that $u, v$ are connected in $G'$.
Let $C$ be the connected component of $G'$ that contains both $u$ and $v$. 
We can always move either $L_C$ or $R_C$ by Proposition~\ref{prop:thncc} and Proposition~\ref{prop:thcc}.
Note that any other connected component $C'$ of $G'$ must lie entirely to the left or entirely to the right of edge $e$ since $\delta_{G'}$ is planar and $u,v$ are connected in $G'$.
\end{proof}

\section{Untangling Almost-Planar Drawings}
\label{sec: restricted untangling}
Finally, we consider how to untangle an almost-planar circular drawing $\delta_G$ of an $n$-vertex outerplanar graph $G=(V,E)$ with the minimum number of vertex moves. 
The main result of this section is the following theorem, which we prove by combining the claims of two propositions.

\begin{theorem}\label{thm:untanglealmost}
We can compute a minimum untangling for an almost-planar circular drawing $\delta_G$ of an $n$-vertex outerplanar graph $G=(V,E)$ in $O(n^2)$ time.
\end{theorem}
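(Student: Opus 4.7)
The plan is to attack the problem in two natural halves, each yielding one of the two propositions that together prove Theorem~\ref{thm:untanglealmost}. Throughout, let $e = uv$ denote the unique edge of $\delta_G$ that carries all crossings, and recall that Theorem~\ref{theoremFixShifting} solves the edge-fixed variant in linear time.

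For the first proposition, I would handle the case where some minimum untangling keeps both $u$ and $v$ at their original positions. Here Theorem~\ref{theoremFixShifting} immediately delivers the optimum in $O(n)$ time, and the proposition just records that the resulting value is the best achievable among all untanglings that fix both endpoints. This also provides the baseline for comparison with the second case.

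For the second proposition, I would handle the case where some minimum untangling relocates $u$, $v$, or both. The guiding observation is that in any planar circular drawing of $G$, the edge $uv$ appears as a chord splitting $V \setminus \{u,v\}$ into two arcs, so the final positions of $u$ and $v$ essentially determine the untangling. By symmetry I would focus on untanglings that move $v$ but fix $u$ (the mirror case is analogous), enumerate the $O(n)$ candidate cyclic insertion points for $v$, and for each such candidate build an auxiliary drawing on which Theorem~\ref{theoremFixShifting} can be invoked in $O(n)$ time. Taking the minimum over all candidates together with the baseline from the first proposition gives an $O(n^2)$ algorithm.

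The main obstacle is twofold. First, after relocating $v$ the auxiliary drawing may no longer be almost-planar: only the edges incident to $v$ change, but these may cross edges of $G - v$ that are already planar, so I need to argue that the contribution of $v$'s incident edges to the move count is captured exactly by edge-fixed untangling on the auxiliary instance, and that an optimum restricted to $V \setminus \{v\}$ remains optimum globally. Second, the case in which both $u$ and $v$ are relocated must be eliminated to avoid enumerating $O(n^2)$ pairs; I expect to establish a structural lemma showing that any untangling moving both endpoints can be converted, without increasing the number of moves, to one that keeps one of $u, v$ fixed. Such a lemma should follow from the one-sided flexibility expressed in Theorem~\ref{th:one_side_move}: once a single endpoint is committed, an entire side can be moved to achieve planarity, so moving the second endpoint is never strictly necessary. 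If this reduction holds, enumerating $O(n)$ candidate positions for the single moved endpoint is sufficient, and the $O(n^2)$ bound of Theorem~\ref{thm:untanglealmost} follows.
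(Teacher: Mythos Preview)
Your decomposition differs from the paper's in a way that creates the very obstacles you identify. The paper does \emph{not} split on whether the endpoints $u$ and $v$ are fixed. Instead it splits on whether each of the connected components $C_u$ and $C_v$ of $G-e$ retains at least one fixed vertex. If one of these components is entirely moved, Proposition~\ref{pro:NoFix} shows that moving all of the smaller component suffices, giving the baseline $\min\{|C_u|,|C_v|\}$. Otherwise both components have a fixed vertex (a \emph{component-fixed} untangling), and the paper determines a constant number of target cyclic orderings via Lemmas~\ref{lemCanonical} and~\ref{lemUnwrap}: one fixes a $2$-connected block $B$ containing $u$ (or $v$), observes that its Hamiltonian order is forced up to reversal, and that each attachment of $B$ can keep its clockwise order from $\delta_G$ without loss. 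The optimum is then a longest common cyclic subsequence with one of at most four explicit target orders, computable in $O(n^2)$ time.

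Both obstacles you flag are genuine gaps that your outline does not close. For the first, once you relocate $v$ the drawing is no longer almost-planar: edges incident to $v$ can now cross arbitrary edges of $G-v$, so Theorem~\ref{theoremFixShifting} simply does not apply to the auxiliary instance, and there is no evident way to ``capture the contribution of $v$'s incident edges'' by that theorem. For the second, your appeal to Theorem~\ref{th:one_side_move} does not establish that one endpoint can always stay fixed in an optimum: that theorem fixes \emph{both} $u$ and $v$ and moves an entire side, so it lives inside your first case and says nothing about optima that move an endpoint. Without a proof that some optimum fixes $u$ or $v$, the enumeration over $O(n)$ positions is not justified, and without a replacement for Theorem~\ref{theoremFixShifting} on non-almost-planar auxiliary drawings, each enumerated subproblem is unsolved. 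The paper sidesteps both issues by never enumerating endpoint positions at all; the canonical-untangling argument pins down the target order directly.
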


Let $e=uv$ be the edge of $\delta_G$ that contains all the crossings, and let $G' = G-e$ and $\delta_{G'}$ be the straight-line circular drawing of $G'$ by removing the edge $e$ from $\delta_G$. 
The edge $uv$ partitions the vertices in $V \setminus \{u,v\}$ into the sets $L$ and $R$ that lie on the left and right side of the edge $uv$ (directed from $u$ to $v$).
Let $C_u$ and $C_v$ be the connected components of $G'$ that contain $u$ and $v$, respectively. 
Note that $C_u = C_v$ if $u, v$ are in the same connected component of $G'$.


\begin{proposition}
\label{pro:NoFix}
It is always possible to untangle $\delta_G$ by moving only the vertices of $C_u$ or only the vertices of $C_v$.
\end{proposition}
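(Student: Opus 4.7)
The plan is to show that $C_u$ can be ``collapsed'' into a single contiguous arc of the circle while every vertex of $V\setminus C_u$ stays in place, and---when $v\notin C_u$---to arrange the inserted block so that the surviving inter-component edge $uv$ becomes an adjacency on the circle. The symmetric construction with the roles of $u$ and $v$ exchanged will then give an untangling moving only $C_v$.

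The first observation I would make is that $G$ contains at most one edge between $C_u$ and $V\setminus C_u$. Since $C_u$ is a connected component of $G'=G-e$, every edge of $G'$ is either internal to $C_u$ or internal to $V\setminus C_u$, so the only possible cross-edge is $e=uv$ itself, and this cross-edge is present exactly when $v\notin C_u$ (the case in which $u,v$ are not connected in $G'$, as in Case~1 of the proof of Theorem~\ref{th:one_side_move}). If $v\in C_u$, that is $C_u=C_v$, then no edge of $G$ crosses the cut $(C_u,V\setminus C_u)$ at all.

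For the construction I would fix an outerplanar cyclic ordering $\tau$ of $C_u$, which exists because $C_u$ is outerplanar as a subgraph of $G$, and insert all vertices of $C_u$ into a single arc of $\mathcal{O}$ delimited by two consecutive vertices of $V\setminus C_u$, in the order prescribed by $\tau$ up to rotation. When $v\notin C_u$ I would pick the arc delimited by $v$ and its $V\setminus C_u$-predecessor $w$ and rotate $\tau$ so that $u$ sits at the end of the block, immediately before $v$ in the new cyclic order. When $v\in C_u$ and $V\setminus C_u\ne\emptyset$ any such arc will do; when $V\setminus C_u=\emptyset$ any outerplanar ordering of $G$ suffices.

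The final step will be to verify planarity of the resulting drawing by inspecting four types of chord pairs: (i) two chords internal to $V\setminus C_u$, which retain their cyclic arrangement from the planar drawing $\delta_{G'}$ and hence remain non-crossing; (ii) two chords internal to $C_u$, non-crossing by outerplanarity of $\tau$; (iii) one chord of each type, which cannot alternate on $\mathcal{O}$ because the $C_u$-endpoints occupy a contiguous block; (iv) the edge $uv$ when $v\notin C_u$, which joins two points adjacent on the circle and therefore crosses nothing. I expect (iv) to be the only delicate point, and it will hinge on the freedom to rotate $\tau$ so that $u$ lands exactly next to $v$---possible precisely because $w$ and $v$ are consecutive in $V\setminus C_u$.
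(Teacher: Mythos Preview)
Your proposal is correct and follows essentially the same approach as the paper: collapse $C_u$ into a contiguous block on the circle and, when $v\notin C_u$, position the block so that $u$ and $v$ become circle-neighbours, whence $uv$ cannot cross anything. Your four-way case analysis on chord pairs is sound.

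One small simplification the paper exploits that you do not: instead of invoking an abstract outerplanar cyclic ordering $\tau$ of $C_u$, the paper takes $\tau$ to be the clockwise order of $C_u$ already present in $\delta_{G'}$ (rotated to start at $u$). Since $\delta_{G'}$ is planar, this order is automatically crossing-free for $G'[C_u]$, so your item~(ii) comes for free without appealing to outerplanarity of the subgraph. This makes the construction entirely explicit from the given drawing and shortens the argument. Your version is equally valid but slightly less direct; the case $v\in C_u$ in particular is dispatched by the paper in one sentence (``trivially true'', since moving all of $C_u=C_v$ lets you redraw arbitrarily), whereas you spell out the block insertion there as well.
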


\begin{proof}
If $C_u = C_v$ the claim is trivially true. 
So let us consider the case that $u$ and $v$ are not connected in $G'$. 
We describe the untangling by moving $C_u$ entirely as follows; with the same idea, we can untangle $\delta_G$ by moving $C_v$.
Let $\sigma_u$ be the clockwise order of $C_u$ in $\delta_{G'}$, starting with $u$. 
We insert the vertices of $C_u$ in the order $\sigma_u$ clockwise right after $v$ to obtain a new drawing $\delta'_{G'}$ of $G'$.
Since $C_u$ was crossing-free before and is placed consecutively on the circle, it remains crossing-free.
No other edges have been moved. 
Furthermore, $u$ and $v$ are now neighbors on the circle, so we can insert the edge $uv$ without crossings and have untangled $\delta_G$.
\end{proof}


It is clear from Proposition~\ref{pro:NoFix} that we can untangle $\delta_G$ by moving all vertices of the smaller of the two connected components $C_u$ and $C_v$, so we obtain $\shifto(\delta_G) \le \min\{|C_u|, |C_v|\}$. Assuming that the untangling from Proposition~\ref{pro:NoFix} is not minimal, we need to find a minimum untangling with $\shifto(\delta_G) < \min\{|C_u|, |C_v|\}$ vertex moves. Thus it remains to consider the case, where some vertices of $C_u$ and some vertices of $C_v$ are not moved; we call such unmoved vertices \emph{fixed} vertices and an untangling with fixed vertices in both components $C_u$ and $C_v$ a \emph{component-fixed untangling}. Then Theorem~\ref{thm:untanglealmost} is obtained by choosing the untangling with fewer vertex moves from the ones provided by Propositions~\ref{pro:NoFix} and~\ref{theoNoFixComponent}.

\begin{proposition}
\label{theoNoFixComponent}
A component-fixed untangling $U$ with the minimum number of vertex moves can be found in $O(n^2)$ time.
\end{proposition}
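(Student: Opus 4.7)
The plan is to reduce the component-fixed minimum-move problem to $O(n)$ invocations of the edge-fixed untangling subroutine from Theorem~\ref{theoremFixShifting}. The pivotal observation is that in any planar circular drawing of $G$, the non-trivial connected components of $G'$ must each occupy a consecutive arc of the circle (possibly interleaved with isolated-vertex components of $G'$, which contribute no edges and therefore no crossings). Combined with the requirement that some vertex of $C_u$ and some vertex of $C_v$ stays at its original position, this constrains the final cyclic position of $u$ (respectively $v$) to one of $O(n)$ slots, namely the gaps adjacent to or within the arc containing the fixed vertices of $C_u$ (resp. $C_v$).

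First I would enumerate candidate final positions of $u$ in the untangled drawing. For each of the $O(n)$ candidates, I construct a modified drawing $\delta'_G$ by relocating $u$ to the proposed slot, paying one vertex move if this differs from $u$'s original position; $\delta'_G$ is still almost-planar with crossed edge $uv$. I then invoke Theorem~\ref{theoremFixShifting} on $\delta'_G$, which returns in linear time the minimum number of further vertex moves needed to untangle $\delta'_G$ while keeping $u$ and $v$ at their (possibly new) positions. Adding the at-most-one move paid for relocating $u$ yields a valid component-fixed untangling cost, since $v\in C_v$ remains fixed throughout. I then repeat the enumeration symmetrically, varying $v$'s position while keeping $u$ at its original location, and return the minimum over both sweeps. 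With $O(n)$ candidates and an $O(n)$ subroutine, the total running time is $O(n^2)$ as claimed.

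The main obstacle will be establishing the structural claim that at least one of $u, v$ may be assumed to remain fixed in some minimum component-fixed untangling, so that the enumeration over the $O(n)$ positions of a \emph{single} endpoint suffices. The intended argument is an interchange: given an optimal untangling that moves both $u$ and $v$, the fixed vertices $f_u\in C_u$ and $f_v\in C_v$ already determine which arcs are occupied by $C_u$ and $C_v$ in the final drawing, so one can slide $u$ (or $v$) back to its original slot along the boundary of its arc without creating crossings, possibly re-positioning adjacent moved vertices in a cost-neutral way. A delicate corner is the case $C_u = C_v$, where both endpoints lie inside a single arc and the interchange must be adapted; should this argument fail in some subcase, a fallback is to enumerate both $u$'s and $v$'s positions jointly and amortize the edge-fixed computations so the total time remains $O(n^2)$.
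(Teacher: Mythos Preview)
Your reduction to Theorem~\ref{theoremFixShifting} breaks at its first step: after relocating $u$ to a candidate slot, the intermediate drawing $\delta'_G$ is \emph{not} in general almost-planar. Every edge of $C_u$ incident to $u$ moves together with $u$, and since the other endpoints of those edges are still at their original positions, these edges may now cross edges of $G'$ that do not involve $uv$ at all. Theorem~\ref{theoremFixShifting} and the propositions of Section~\ref{sec:bound} it rests on all require an almost-planar input, so the linear-time subroutine cannot be invoked on $\delta'_G$, and its optimality guarantee (move exactly $\min\{|L_C|,|R_C|\}$ vertices per component) no longer applies. The structural claim you flag as the ``main obstacle'' is a second, independent gap: when $C_u=C_v$, the vertices $u,v$ lie in a common $2$-connected component $B$ of $G$ whose cyclic order in any planar drawing is $H(B)$ up to reversal, and a longest common cyclic subsequence between the initial order and $H(B)$ need not contain either $u$ or $v$; insisting that one of them stay put can therefore strictly increase the move count. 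Your fallback of enumerating both endpoints gives $O(n^{2})$ pairs at $\Omega(n)$ work each, and no amortisation scheme is supplied.

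The paper avoids both issues by never relocating $u$ or $v$ speculatively and never re-invoking the edge-fixed subroutine. Instead it proves (Lemmas~\ref{lemCanonical} and~\ref{lemUnwrap}) that an optimal component-fixed untangling can be assumed \emph{canonical} with respect to a suitable $2$-connected component $B$: the vertices of $B$ appear in $H(B)$ or its reverse, and every attachment of $B$ retains its original internal cyclic order from $\delta_G$. This pins down the target cyclic ordering completely up to one binary choice when $u,v$ are connected in $G'$ and two independent binary choices otherwise; for each of the at most four resulting targets, the minimum number of moves is $n$ minus a longest common cyclic subsequence with the initial order, computable in $O(n^{2})$.
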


The reminder of this section is devoted to computing the untangling $U$. 
We distinguish between two cases based on whether $u,v$ are connected in $G'$ or not. 
In each case, we present an untangling that runs in $O(n^2)$ time and reports an optimal component-fixed untangling.

We introduce some notions and provide basic observations.
Let $G$ be a connected outerplanar graph.
Let $B$ be a $2$-connected component of $G$ and $E(B)$ the set of edges in $B$.
Since $G$ is connected and $B$ is $2$-connected, each connected component of $G-E(B)$ contains exactly one vertex in $B$.
Given a vertex $b$ in $B$, let $C_b$ be the connected component of $G-E(B)$ that contains $b$\remove{ and assume $C$ is not empty}. We denote $C_b$ as the \textit{attachment} of the $2$-connected component $B$ at the vertex $b$.

Let $H(B)$ be the cyclic vertex ordering of $B$ in the order of its Hamiltonian cycle.
Recall that every biconnected outerplanar graph has a unique Hamiltonian cycle~\cite{SYSLO197947}.
We get Observation~\ref{obsAttachmentConsecutive}; see Figure $\ref{fig:attachment}$.

\begin{observation}
\label{obsAttachmentConsecutive}
\new{Let $\delta_G$ be a planar circular drawing of an outerplanar graph $G$ and $B$ be a 2-connected component of $G$. 
Then, the clockwise cyclic vertex ordering of $B$ in $\delta_G$ is either $H(B)$ or its reverse. Furthermore, for each attachment of $B$, its vertices appear consecutively on the circle in $\delta_G$.}
\end{observation}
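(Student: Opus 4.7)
The plan has two parts, matching the two claims of the observation.

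For the first claim, I would consider the subdrawing of $B$ induced by $\delta_G$. Since $\delta_G$ is planar, this subdrawing is a planar straight-line drawing of $B$; since every vertex of $B$ lies on the circle $\mathcal{O}$, each such vertex is incident to the unbounded face. Hence the induced subdrawing is an outerplanar drawing of $B$. Because $B$ is $2$-connected and outerplanar, its combinatorial outerplanar embedding is unique up to reflection, and the boundary of its outer face is precisely the unique Hamiltonian cycle $H(B)$. Reading this outer-face boundary clockwise therefore yields the clockwise cyclic order of $V(B)$ on $\mathcal{O}$, which must equal $H(B)$ or its reverse.

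For the second claim, I would fix an attachment $C_b$ and work inside the lens regions created by the edges of $H(B)$. Let $b_1,\ldots,b_k$ be the vertices of $B$ in the clockwise order of $H(B)$ on $\mathcal{O}$ from Part~1. For each edge $b_ib_{i+1}$ of $H(B)$, the chord $b_ib_{i+1}$ together with the open arc of $\mathcal{O}$ from $b_i$ to $b_{i+1}$ (which contains no other vertex of $B$) bounds a lens $R_i$. I would then establish two claims about $R_i$.

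First, every vertex $v$ lying on the arc of $R_i$ belongs to $C_{b_i}$ or $C_{b_{i+1}}$. Indeed, $v$ belongs to the unique attachment $C_{b'}$ for some $b' \in B$, and we can pick a $v$-to-$b'$ path in $C_{b'}$. No edge of this path is in $E(B)$, so planarity of $\delta_G$ forbids any such edge from crossing the chord $b_ib_{i+1}$. Consequently every vertex of the path sits on the same side of that chord as $v$, so $b'$ itself lies in the closure of $R_i$, forcing $b' \in \{b_i, b_{i+1}\}$. Second, the vertices of $C_{b_i}$ and $C_{b_{i+1}}$ do not interleave along the arc of $R_i$: if they did, I could find $u_1, u_3 \in C_{b_i}$ and $u_2 \in C_{b_{i+1}}$ in cyclic order $b_i, u_1, u_2, u_3, b_{i+1}$ and take a $u_1$-to-$u_3$ path $P_1$ inside $C_{b_i}$ together with a $u_2$-to-$b_{i+1}$ path $P_2$ inside $C_{b_{i+1}}$. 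Since distinct attachments share no vertices, $P_1$ and $P_2$ are vertex-disjoint, and their four endpoints alternate on $\mathcal{O}$; iterating the edge-level alternation criterion for crossings of straight-line chords along $P_1$ and $P_2$ produces a crossing, contradicting planarity of $\delta_G$.

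Combining the two claims, inside each of the two lenses adjacent to $b$ the vertices of $C_b$ form a contiguous block touching $b$, and joining these blocks with $b$ itself yields a single consecutive arc of $C_b$-vertices on $\mathcal{O}$. The main obstacle I anticipate is the interleaving argument: the crossing statement is immediate for a pair of straight-line chords, but lifting it to the polylines $P_1$ and $P_2$ requires the iterative argument above or an appeal to the standard topological fact that two internally disjoint arcs in a closed disk with alternating boundary endpoints must cross.
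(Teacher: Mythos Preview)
The paper does not actually prove this observation; it is stated immediately after recalling that every biconnected outerplanar graph has a unique Hamiltonian cycle, with only a pointer to Figure~\ref{fig:attachment} by way of justification. Your argument is correct and supplies the details the paper leaves implicit.

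A couple of minor remarks. For the second claim, your non-interleaving argument with $u_1,u_2,u_3$ also covers the case needed to show that the $C_{b_i}$-block abuts $b_i$ (take $u_1=b_i$, which lies in $C_{b_i}$), so the ``touching $b$'' conclusion is indeed justified. And the obstacle you flag at the end is not really an obstacle: two internally vertex-disjoint paths drawn inside the closed disk with endpoints alternating on the boundary circle must cross, by the Jordan curve theorem applied to the closed curve formed by one path together with a boundary arc. This is exactly the standard fact underlying the chord-crossing criterion, so no edge-by-edge iteration is needed.
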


\begin{figure}[htbp]
	\centering
\includegraphics[page=1]{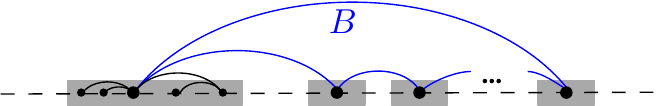}
\caption{A 2-connected component $B$ (in blue) and its attachments (gray boxes) in a planar circular drawing.
}    
\label{fig:attachment}
\end{figure}

Given a connected outerplanar graph $G$, a $2$-connected component $B$ of $G$ and a circular drawing $\delta_G$, 
we say a sequence $S$ of vertex moves \new{of $G$} is \textit{canonical}, with respect to  $B$, if in the drawing obtained by applying $S$ \new{to $\delta_G$}, the clockwise cyclic vertex ordering of each attachment of $B$ remains unchanged.
Now we are ready to show that an optimal component-fixed untangling with the restriction that fixed vertices exist in \new{both of} $C_u$ and $C_v$ can be found in 
$O(n^2)$ time; see Theorem~\ref{theoNoFixComponent}.

\medskip
\noindent\textbf{Case 1: $u$ and $v$ are connected in $G'$.} Let $C$ be a connected component of $G'$ that does not contain $u,v$.
\new{We claim now that $C$ must lie entirely on one side of $uv$ in $\delta_{G}$. 
Otherwise, let $P$ be a path of $\delta_{G'}$ that connects $u$ and $v$. Then there would exist crossings between edges of $P$ and edges of $C$ in $\delta_{G'}$ which contradicts the fact that $\delta_{G'}$ has no crossings.
Thus, we can ignore such components as they do not need to be involved in an untangling.}
Hence, we may assume $G'$ is a connected graph. 
If $u$ and $v$ are $2$-connected in $G'$, then $\delta_G$ is already outerplanar; see Proposition~\ref{prop:Biconected}.
Now we consider the case that $u$ and $v$ are connected, but not 2-connected in $G'$.
Note that $u,v$ are 2-connected in $G$. 
Let $B$ be the 2-connected component of $G$ that contains $u,v$.
We prove that each component-fixed untangling $U$ can be transformed into a canonical untangling with smaller or the same number of vertex moves; see Lemma~\ref{lemCanonical}.
Thus, we restrict our attention to canonical untanglings.
Let $H(B) = b_1,\ldots b_k$ be the cyclic vertex ordering of the Hamiltonian cycle of $B$.
Let $A_i$ be the attachment of $B$ at the vertex $b_i$ and let $\sigma(A_i)$ be the clockwise vertex ordering of $A_i$ in $\delta_G$ for $i \in \{1,\ldots, k\}$.
We consider an optimal canonical component-fixed untangling $U^o$ which orders $B$ clockwise as $H(B)$. 
Let $\delta^{''}_G$ be the outerplanar drawing obtained by applying $U^o$.
Then the clockwise vertex ordering of $\delta^{''}_G$ is exactly the concatenation of 
$\sigma(A_1),\sigma(A_2),\ldots, \sigma(A_k)$. 
Given $\delta^{''}_G$, an optimal untangling transforming $\delta_G$ to $\delta^{''}_G$ can be computed in $O(n^2)$ time; see~\cite{DBLP:journals/corr/abs-1208-0396}. 
Analogously, we obtain an optimal component-fixed untangling $U^r$ which orders $B$ counterclockwise as $H(B)$. 
From the two untanglings $U^o$ and $U^r$, we report the one which moves less vertices as the optimal component-fixed untangling.

\begin{lemma}
\label{lemCanonical}
Let $B$ be the 2-connected component of $G$ that contains $u,v$. 
Every component-fixed untangling $U$ of $\delta_G$ can be transformed into a canonical vertex move sequence $U^c$ (with respect to  $B$) that untangles $\delta_G$.
Furthermore, the number of vertex moves in $U^c$ is not greater than 
the number of vertex moves in $U$.
\end{lemma}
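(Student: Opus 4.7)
The plan is to transform $U$ into a canonical move sequence $U^c$ by preserving the coarse structure of the drawing produced by $U$ while restoring each attachment's original cyclic order. First, apply $U$ to $\delta_G$ to obtain a planar drawing $\delta'_G$. By Observation~\ref{obsAttachmentConsecutive} applied to $B$ inside $\delta'_G$, the cyclic vertex order of $B$ is either $H(B)$ or its reverse (WLOG $H(B)$), and each attachment $A_i$ occupies a contiguous arc of the circle.

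Next, construct $\delta^c_G$ by reordering the vertices within each arc so that the clockwise cyclic order of $A_i$ becomes $\sigma(A_i)$, choosing the linear layout inside the arc so that $b_i$ sits at an extreme of the arc. The central observation bounding the move count is that the vertices of $A_i$ fixed by $U$ retain their $\delta_G$-positions, and their relative cyclic order equals the restriction of $\sigma(A_i)$ to them. Hence the fixed vertices are already in positions consistent with the target order $\sigma(A_i)$, and only vertices already moved by $U$ need to be repositioned within the arc.

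Finally, to verify $\delta^c_G$ is planar I would argue: (i) $B$ is drawn planarly since its cyclic order is $H(B)$ and $B$ is outerplanar; (ii) edges within $A_i$ do not cross each other, since $\sigma(A_i)$ is the cyclic order of $A_i$ in $\delta_G$ and every crossing in $\delta_G$ involves only the edge $uv$, which does not belong to $A_i$; and (iii) a $B$-edge incident to $b_i$ cannot cross any $A_i$-edge, because with $b_i$ at an extreme of its arc the remaining vertices of $A_i$ all lie on one side of $b_i$ in the cyclic order. Letting $U^c$ denote the move sequence from $\delta_G$ to $\delta^c_G$, every vertex fixed by $U$ remains fixed in $U^c$, so $|U^c| \le |U|$, and $\delta^c_G$ is canonical with respect to $B$ by construction. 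The main obstacle is step (iii), specifically the claim that $b_i$ can be placed at an arc extreme while simultaneously keeping the fixed vertices of $A_i$ at their $\delta_G$-positions; I expect this to follow from inspecting $\delta'_G$, since its planarity already forces $b_i$ to lie at an extreme of the arc of $A_i$ in $\delta'_G$, so whenever $b_i$ is fixed by $U$ its $\delta_G$-position is compatible with the desired extreme placement.
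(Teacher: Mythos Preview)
Your approach is essentially the paper's: apply $U$, use Observation~\ref{obsAttachmentConsecutive} to get the block structure $A_1,\dots,A_k$ of $\delta^U_G$, then replace each block's internal order by $\sigma(A_i)$ and argue that fixed vertices survive. The paper phrases the move-count bound as ``every common subsequence of $\sigma_G$ and $\sigma^U_G$ is a subsequence of $\sigma'_G$,'' which is the combinatorial version of your geometric ``fixed vertices are already consistent with $\sigma(A_i)$.'' You are actually more careful than the paper on planarity: the paper only checks crossings inside $B$ and inside each $A_i$, whereas you correctly flag that $B$-edges incident to $b_i$ could in principle cross $A_i$-edges.

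However, your proposed resolution of the obstacle is not right. Planarity of $\delta'_G$ does \emph{not} force $b_i$ to an extreme of its attachment's arc: if $b_i$ is a cut-vertex of $A_i$, then $A_i$ can legitimately straddle $b_i$ in a planar drawing, with one component of $A_i - b_i$ on each side. So the case ``$b_i$ is fixed by $U$ but sits in the interior of its arc'' can occur, and your argument for (iii) as stated breaks. The fix is not to force $b_i$ to an extreme, but to choose the linearization of $\sigma(A_i)$ so that its cut point matches the cut induced by the neighboring blocks in $\delta^U_G$; then any $A_i$-edge that would cross a $B$-edge at $b_i$ in $\delta^c_G$ would already witness a crossing in $\delta^U_G$ (since such an edge would join the two sides of $b_i$ without passing through $b_i$, contradicting that $b_i$ separates them in the planar $\delta^U_G$). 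Equivalently, work purely with cyclic subsequences as the paper does and avoid committing to exact positions; this sidesteps the ``$b_i$ at an extreme'' issue entirely while still yielding $|U^c|\le|U|$.
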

\begin{proof}
\new{Given a component-fixed untangling $U$ of $\delta_G$, let $\delta^U_G$ be the drawing obtained after applying $U$ on $\delta_G$. 
In $\delta^U_G$, the cyclic vertex ordering of $B$ (clockwise or counterclockwise) must correspond to its Hamiltonian cycle ordering $H(B)$. 
Furthermore, the vertices of each attachment of $B$ appear consecutively in $\delta^U_G$, including  one vertex of $B$; see Observation~\ref{obsAttachmentConsecutive}.
Let $A_1,\ldots,A_k$ be the attachments of $B$ in $G$ (indexed in clockwise order as in $\delta^U_G$) and let $\sigma(A_i)$ be the clockwise vertex ordering of $A_i$ in $\delta_G$ for $i \in \{1\ldots k\}$.
Now consider the vertex ordering $\sigma'_G=$($\sigma(A_1),\cdots,\sigma(A_k)$) and let $\delta'_G$ be an arbitrary circular drawing where the vertices are ordered as $\sigma'_G$. 
Note that the vertex ordering of each attachment is $\sigma(A_i)$ in $\delta'_G$ as in the almost-planar drawing $\delta_G$, thus each attachment in $\delta'_G$ is crossing-free. 
Moreover, in $\delta'_G$ the vertices of $B$ are ordered as in the planar drawing $\delta^U_G$, thus there is no crossing inside $B$.  
Overall, $\delta'_G$ is a planar circular drawing.
Let $U^c$ be the untangling of $\delta_G$ with minimum number of vertex moves such that the clockwise vertex ordering of the resulting drawing is $\sigma'_G$.
}

\new{
To see that $U^c$ does not move more vertices than $U$, let $\sigma_G$ and $\sigma^U_G$ be the clockwise vertex orderings of $\delta_G$ and  $\delta^U_G$, respectively.
We can observe that any common subsequence of  $\sigma_G,\sigma^U_G$ is a subsequence of $\sigma'_G$.
}
\end{proof}

\noindent\textbf{Case 2: $u$ and $v$ are not connected in $G'$.}
Note that a connected component of $G'$ that lies entirely on one side of $uv$ in $\delta_{G}$ can be ignored, since there is no need to move any vertices in such components.
After ignoring such components, we can assume that a connected component $C$ of $G'$ either contains $u,v$ or $C$ contains vertices from $L$ and also vertices from $R$.

\begin{observation}
\label{obs:cu_cv_consecutive}
In $\delta_{G'}$, vertices of $C_u$ (resp. $C_v$) lie consecutively on the cycle.
\end{observation}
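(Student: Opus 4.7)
The plan is to prove Observation~\ref{obs:cu_cv_consecutive} by contradiction, using the planarity of $\delta_{G'}$ together with the structural hypothesis of Case~2: after ignoring components entirely on one side of $uv$, every remaining connected component of $G'$ either contains $u$, contains $v$, or has vertices in both $L$ and $R$. I argue the claim for $C_u$; the argument for $C_v$ is fully symmetric.

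Suppose for contradiction that the vertices of $C_u$ are not cyclically consecutive on~$\mathcal{O}$. Then there exist $a, b \in C_u$ and two vertices $w_1, w_2 \notin C_u$ appearing in cyclic order $a, w_1, b, w_2$ on~$\mathcal{O}$. Since $C_u$ is connected, I fix a path $P$ in $C_u$ from $a$ to $b$, drawn in $\delta_{G'}$ as a polyline of straight-line chords inside the disk bounded by $\mathcal{O}$. Together with one of the two arcs of $\mathcal{O}$ from $a$ to $b$, the chain $P$ bounds a closed Jordan region $\Gamma$ in the disk; by choice, $w_1$ and $w_2$ lie on opposite arcs of $\mathcal{O}\setminus\{a,b\}$ and hence on opposite sides of $\Gamma$.

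The key step is to construct a path $Q$ in $G'$ from $w_1$ to $w_2$: its straight-line drawing in $\delta_{G'}$ must then interleave with $P$ and cause a crossing, contradicting planarity. I obtain $Q$ by a case analysis on the components of $w_1$ and $w_2$. If $w_1$ and $w_2$ belong to the same component $C'$ of $G'$, then any path inside $C'$ between them serves as $Q$. Otherwise, $w_1$ and $w_2$ lie in different components $C^1, C^2$; the Case~2 post-ignoring hypothesis guarantees that each of $C^1, C^2$ either equals $C_v$ or is spanning and hence contains vertices in both $L$ and $R$. In this subcase I would use $C_v$ as a connecting hub, combining sub-paths from $w_1$ to some vertex on one side of $uv$, through $v$, to some vertex near $w_2$, to assemble $Q$.

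The main obstacle will be the last subcase, where $w_1$ and $w_2$ lie in two distinct spanning components and neither meets $C_v$ directly; here one has to work slightly harder to produce $Q$, possibly by iterating through several spanning components and using the cyclic order on~$\mathcal{O}$ to control where these components reach. Once $Q$ is constructed, however, the contradiction follows from the standard geometric fact that two straight-line chords whose endpoints interleave on~$\mathcal{O}$ must intersect in the interior of the disk; applied to a chord of $P$ and a chord of $Q$, this yields a crossing in the planar drawing $\delta_{G'}$, the desired contradiction.
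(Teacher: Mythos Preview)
The paper states this observation without proof, so there is no argument to compare against directly; nonetheless your attempt has a real gap.

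Your overall strategy---a path $P$ in $C_u$ whose endpoints alternate with two outside vertices, then a second path $Q$ forcing a crossing---is the right shape. The problem is the construction of $Q$. If $w_1$ and $w_2$ lie in different connected components of $G'$, there is \emph{no} path between them in $G'$; that is precisely what ``different components'' means. Your proposal to route $Q$ ``through $v$'' using $C_v$ as a hub cannot work: a component $C^1\neq C_v$ has no edge to $C_v$ in $G'$, and the only link between $C_u$ and $C_v$ in $G$ is the removed edge $uv$. The ``main obstacle'' you flag is therefore not a technicality to be ironed out but a dead end for this particular plan.

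The fix is to stop trying to join $w_1$ to $w_2$ and instead work with one well-chosen vertex. Among the (at least two) maximal circular arcs consisting of non-$C_u$ vertices, $v$ lies in exactly one; pick an arc $A$ that does \emph{not} contain $v$, bounded by $a,b\in C_u$, and take any $w\in A$. Since neither $u$ nor $v$ lies in $A$, the arc $A$ is entirely contained in $L$ or entirely in $R$, say $A\subseteq L$. By the Case~2 post-ignoring hypothesis, the component of $w$ is either $C_v$ (which contains $v\notin A$) or a spanning component (which contains some vertex of $R$, hence outside $A$). Either way there is a path $Q$ \emph{inside that single component} from $w$ to a vertex outside $A$. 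This $Q$ is vertex-disjoint from the $a$--$b$ path $P\subseteq C_u$, and its endpoints are separated by $P$ on the circle, so $Q$ must cross $P$ in $\delta_{G'}$---the desired contradiction.
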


The first step of our untangling $U$ deals with the connected components of $G'$ that neither contain $u$ nor $v$.
Let $U^\mathrm{fix}$ be an arbitrary component-fixed untangling of $\delta_G$, and let  $\delta^\mathrm{fix}_G$ be the outerplanar drawing of $G$ obtained from $\delta_G$ by applying $U^\mathrm{fix}$.  

\begin{lemma}
\label{lem:lower_bound_hypo}
Let $C$ be a connected component of $G'$ that does not contain vertices 
$u$ or $v$.
Let $f_u, f_v$ be two vertices in $C_u$ and $C_v$, respectively, which are fixed in $\delta^\mathrm{fix}_G$.
Then, $C$ must lie entirely on one side of $f_uf_v$\footnote{\new{Given a circular drawing of $G=(V,E)$, two vertices $a, b$ partitions the vertices in $V\setminus\{a,b\}$ into two sets that lie on the left side and right side of the ray $\overrightarrow{\rm ab}$.}   } in $\delta^\mathrm{fix}_G$.
\end{lemma}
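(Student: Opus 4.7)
The plan is to exploit a standard fact about planar circular drawings of disconnected graphs: each connected component occupies a contiguous arc of the circle. Once we have that, the lemma follows almost immediately from the hypothesis that $C$ is disjoint from both $C_u$ and $C_v$.

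First, I would observe that since $\delta^\mathrm{fix}_G$ is the outerplanar drawing produced by an untangling of $\delta_G$, deleting the edge $uv$ yields a planar circular drawing $\delta^\mathrm{fix}_{G'}$ of $G'$. I would then establish (or cite as a straightforward Jordan-curve argument) the following auxiliary fact: in any planar circular drawing of a graph $H$, the vertices of each connected component of $H$ occupy a contiguous arc of the circle. The argument is the usual one: if two components $H_1,H_2$ had interleaved vertices $a,c,b,d$ (with $a,b \in H_1$ and $c,d \in H_2$), then a path from $a$ to $b$ in $H_1$ together with the arc from $b$ to $a$ through $c$ would form a closed curve inside the disk separating $c$ from $d$, forcing a crossing with any $cd$-path in $H_2$.

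Applying this to $\delta^\mathrm{fix}_{G'}$, each of the components $C_u$, $C_v$, and $C$ occupies a contiguous arc, call them $\alpha_u$, $\alpha_v$, and $\alpha_C$ respectively. Because we are in the case where $u$ and $v$ are \emph{not} connected in $G'$, the components $C_u$ and $C_v$ are distinct, and by hypothesis $C$ contains neither $u$ nor $v$, so $C$ is distinct from both $C_u$ and $C_v$. Consequently the three arcs $\alpha_u, \alpha_v, \alpha_C$ are pairwise disjoint; in particular $f_u \in \alpha_u$ and $f_v \in \alpha_v$ do not lie in $\alpha_C$.

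To conclude, I would note that the two points $f_u, f_v$ split the circle into exactly two open arcs, and each of these open arcs lies on one side of the straight line through $f_u$ and $f_v$. Since $\alpha_C$ is a contiguous subset of the circle avoiding both $f_u$ and $f_v$, it must be contained in one of these two open arcs, and hence $C$ lies entirely on one side of the chord $f_uf_v$ in $\delta^\mathrm{fix}_G$. The only non-trivial step is the auxiliary fact about contiguous arcs, which I would state as a short observation (or embed its brief Jordan-curve justification inside the proof), since the rest of the argument is essentially bookkeeping about disjoint arcs.
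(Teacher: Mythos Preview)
Your auxiliary fact is false as stated: it is not true that every connected component of a graph occupies a contiguous arc in every planar circular drawing. For instance, take the graph consisting of a single edge $ab$ together with an isolated vertex $c$, drawn in the cyclic order $a,c,b$; this drawing is planar, yet the component $\{a,b\}$ is not contiguous. What your Jordan-curve argument actually proves is the weaker (and true) statement that no two components \emph{interleave}, i.e., one cannot find $a,b$ in one component and $c,d$ in another with cyclic order $a,c,b,d$. But non-interleaving of the components of $G'$ does not yield the lemma, because $f_u$ and $f_v$ lie in \emph{different} components $C_u$ and $C_v$ of $G'$; there is no single $G'$-component containing both of them against which $C$ could be shown to not interleave.

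The repair is to run the argument in $G$ rather than in $G'$. In $G$ the edge $uv$ merges $C_u$ and $C_v$ into one connected component, while $C$ (containing neither $u$ nor $v$) remains a separate component of $G$. Now non-interleaving applies directly: if $C$ had vertices $x,y$ on opposite sides of $f_uf_v$, then $x,f_u,y,f_v$ would alternate, and the $xy$-path inside $C$ would have to cross the $f_uf_v$-path through $uv$ inside $C_u\cup C_v$, contradicting the planarity of $\delta^{\mathrm{fix}}_G$. This is precisely the paper's argument: it exhibits the explicit path $P = P_1\, uv\, P_2$ from $f_u$ to $f_v$ in $G$ and observes that any edge of $C$ straddling $f_uf_v$ must cross $P$. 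So your overall strategy is essentially the right one, but the crucial connecting path lives in $G$, not in $G'$.
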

\begin{proof}
In the graph $G$, due to the definition of $f_u$ and $f_v$, there exists a path $P_1$ in $C_u$ connecting $f_u$ to $u$, and a path $P_2$ in $C_v$ connecting $v$ to $f_v$; see Figure~\ref{fig:oneside_fixed}. Then, the path $P=P_1 uv P_2$ in $G$ connects $f_u$ to $f_v$.
In $\delta^\mathrm{fix}_G$, suppose that the connected component $C$ is not entirely on one side of $f_uf_v$, it implies that at least one edge $xy$ in $C$ has endpoints $x, y$ alternate with $f_u, f_v$ in clockwise ordering of $\delta^\mathrm{fix}_G$ and then has crossings with $P$.
It contradicts the outerplanarity of the drawing $\delta^\mathrm{fix}_G$. 
\end{proof}
\begin{figure}[htb]
	\centering
\includegraphics[page=3]{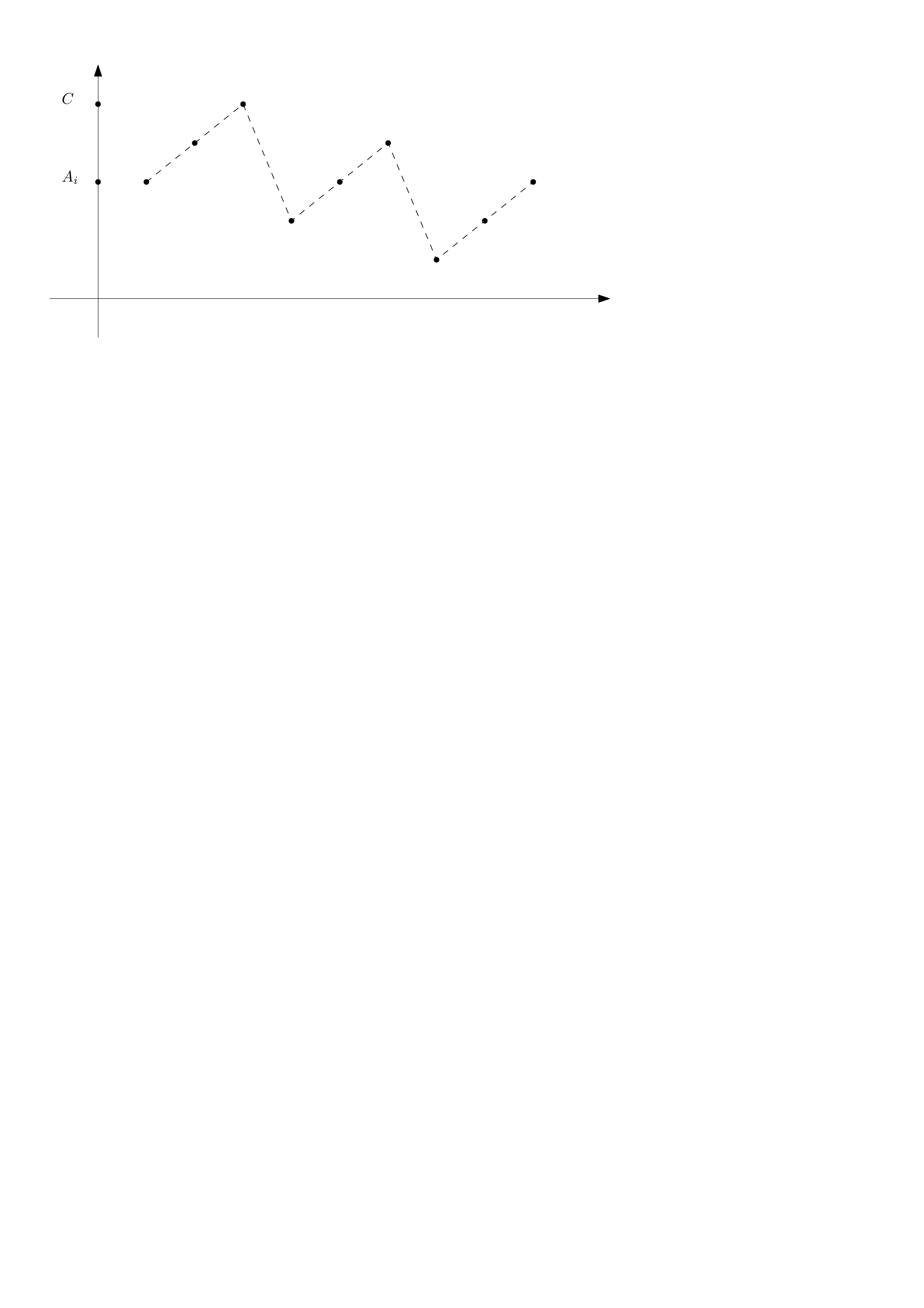}
\caption{An example illustration for the proof of Lemma~\ref{lem:lower_bound_hypo}.}
\label{fig:oneside_fixed}
\end{figure}

Now let $C$ be a connected component that does not contain $u.v$. Vertices $f_u$ and $f_v$ partition the vertices of $C$ in drawing $\delta_G$ into two sets $L_C$ and $R_C$ that are encountered clockwise and counter-clockwise from $f_u$ to $f_v$ in $\delta_G$, respectively. 
Observe that, $L_C = L\cap C$ and $R_C = R\cap C$; see Observation~\ref{obs:cu_cv_consecutive}. 
Let $m(C) = \min\{|L\cap C|,|R\cap C|\}$.
By Lemma~\ref{lem:lower_bound_hypo}, $m(C)$ is a lower bound of the moved vertices in $C$ in a component-fixed untangling.
By Proposition~\ref{prop:thcc1}, we can move $m(C)$
vertices of $C$ such that $C$ lies entirely on one side of $uv$.
In the first step of our untangling $U$, we repeat this step for each component not containing $u$ or $v$. 
After that, an almost-planar drawing of $G$ remains that has already each component not containing $u$, $v$ placed entirely on one side of $uv$. 
We can ignore such components from now on since they never need to be moved again.

Now we assume that $G'$ has exactly two connected components, namely $C_u$ and $C_v$.
Consider an arbitrary outerplanar drawing  $\delta'_G$ of $G$.
Let $\sigma(\delta'_G)$ be the circular ordering of vertices in $\delta'_G$ encountered clockwise. 
Observe that, in $\sigma(\delta'_G)$, the vertices of $C_u$ (resp. $C_v$) are in a consecutive subsequence $\sigma(C_u)$ (resp. $\sigma(C_v)$).
Otherwise, alternating vertices of two connected components would  introduce crossings.

\begin{figure}[tb]
	\centering
\includegraphics[page=2]{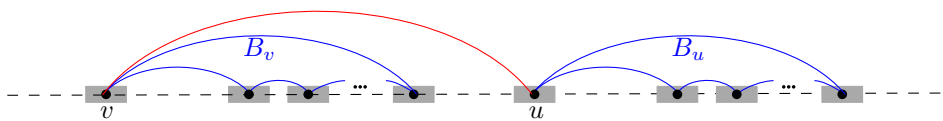}
\caption{In any clockwise vertex ordering of a planar circular drawing, $u, v$ must be the extreme vertices in the 2-connected components $B_v$ and $B_u$, respectively.}    
\label{fig:double-circle}
\end{figure}
Given an edge $e'$ in $C_v$, we say $e'$ \textit{covers} $v$ if the endpoints of $e$ alternate with $u$ and $v$ in $\delta_{G'}$.
Note that there is no edge covering $v$ in $\sigma(C_v)$. 
Otherwise, such an edge would cross with edge $uv$. 
Therefore, in a valid untangling of $\delta_G$, it is necessary to move vertices of $C_v$ in $\delta_G$ such that no crossing is introduced in $C_v$ and $v$ is not covered by any edges in $C_v$. 
Similarly, the same claim holds also for $C_u$.
We call such vertex moves \textit{vertex unwrapping}.  
In the following, we consider how to find a valid unwrapping of $v$ with the minimum number of vertex moves.  
The same operation will be also applied to $C_u$.
Observe that, once $u, v$ are both unwrapped, adding the edge $e$ into the drawing does not introduce any crossings.
The combination of these two unwrappings makes an optimal untangling. 
Here, we also consider the canonical vertex sequences and get the following 
Lemma~\ref{lemUnwrap}.
The proof is quite similar to the proof of Lemma~\ref{lemCanonical} which concerns canonical untanglings.

\begin{observation}
\label{obsUnwrap}
There exists at least one 2-connected component $B$ of $C_v$ such that $B$ contains $v$ and no edge in the attachment of $v$ (with respect to $B$) covers $v$ in $\delta_G$.
\end{observation}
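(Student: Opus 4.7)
My plan is to case-split on whether $v$ is a cut-vertex of $C_v$. First, if $v$ lies in a unique 2-connected component $B$ of $C_v$, then the attachment of $v$ with respect to $B$ consists only of the vertex $v$ and has no edges; the observation follows immediately.

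Second, assume $v$ is a cut-vertex with 2-connected components $B_1, \ldots, B_k$ at $v$. Let $T_i$ denote the subgraph of $C_v$ consisting of $B_i$ together with every block reachable from $B_i$ in the block-cut tree of $C_v$ without traversing $v$. Then the attachment of $v$ with respect to $B_i$ equals $\{v\} \cup \bigcup_{j \neq i} T_j$ as a subgraph of $C_v$. If no edge of $C_v$ covers $v$, any $B_i$ works; so I may assume the set of covering edges is nonempty.

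The crux of the proof is the following claim: there exists an index $i^*$ such that all edges of $C_v$ covering $v$ lie in the single subtree $T_{i^*}$. Once established, taking $B = B_{i^*}$ yields an attachment of $v$ whose edges are precisely those of $\bigcup_{j \neq i^*} T_j$, and therefore contains no edge covering $v$. To prove the claim I use the planarity of $\delta_{G'}$ together with Observation~\ref{obsAttachmentConsecutive}: in $\delta_{G'}$, the covering edges are chords each separating $u$ from $v$ and pairwise not crossing, hence nested, and the blocks $B_1,\ldots,B_k$ do not interleave in the rotation around $v$.

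The main obstacle is to formalize the geometric step that rules out covering edges residing in two different subtrees. My strategy is a proof by contradiction: assume a covering edge $e_i \in T_i$ and another $e_j \in T_j$ with $i \neq j$. Using the 2-connectivity of each $B_\ell$ together with the outerplanar embedding given by $\delta_{G'}$, I aim to show that a $v$-incident edge belonging to the block containing the outer covering edge must cross the inner covering edge. Such a crossing would live in $\delta_{G'}$ (and hence in $\delta_G$) without involving $uv$, contradicting the almost-planarity of $\delta_G$ and establishing the claim.
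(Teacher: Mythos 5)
Your reduction is set up correctly: identifying the attachment of $v$ with respect to $B_i$ as $\{v\}\cup\bigcup_{j\neq i}T_j$, dispatching the case where $v$ lies in a unique block, and reducing the observation to the claim that all covering edges lie in a single branch $T_{i^*}$ at $v$ is exactly the content that the paper compresses into ``due to planarity of $\delta_{G'}$''. The problem is that this claim is the entire substance of the observation, and you do not prove it; moreover, the intermediate statement you propose to formalize would fail. First, ``the block containing the outer covering edge'' need not contain $v$ at all (a covering edge may sit deep inside a branch, in a block far from $v$), so ``a $v$-incident edge belonging to that block'' may not exist. Second, even reading it as the block $B_j$ at $v$ of the outer branch, it is not forced that one of its $v$-incident edges crosses the inner covering edge: the first edge leaving $v$ into $T_j$ may end at a vertex lying on the same side of the inner chord as $v$, and only some edge further along the branch is obliged to cross. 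So the step you flag as ``the main obstacle'' is not just unformalized; as stated it is not provable, and a proof attempt along that line would stall.

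The repair is a path argument rather than an argument about a $v$-incident edge. Suppose covering edges $ab\in T_i$ and $cd\in T_j$ with $i\neq j$; if they share an endpoint they lie in the same branch, so assume they do not. Both chords cross the chord $uv$ and do not cross each other (planarity of $\delta_{G'}$), so one of them, say $ab$, has both $c$ and $d$ strictly on the side of $ab$ not containing $v$ (note $a,b\notin\{u,v\}$). Take a path in $T_j$ from $v$ to $c$: it avoids $a$ and $b$, since those lie in $T_i\setminus\{v\}$, which is disjoint from $T_j\setminus\{v\}$. The path starts on the $v$-side of the chord $ab$ and ends strictly on the other side, so some edge of it has endpoints strictly on opposite sides of $ab$ and hence crosses $ab$ in $\delta_{G'}$ --- a crossing not involving $uv$, contradicting the almost-planarity of $\delta_G$. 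With this step inserted your argument is complete and is essentially the paper's (whose own justification is only a two-sentence sketch of the same planarity argument); the nestedness of the covering chords and Observation~\ref{obsAttachmentConsecutive} are then not really needed beyond the one separation statement above.
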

The reason for this observation is that either no 2-connected component $B$ containing $v$ contains an edge covering $v$, in which case $v$ is already unwrapped and the statement is true for any such $B$. Or some 2-connected component $B$ does contain a covering edge, but then the attachment of $v$ in $B$ cannot cover $v$ due to planarity of $\delta_{G'}$.

\begin{lemma}
\label{lemUnwrap}
Let $B$ be a 2-connected component of $C_v$ that contains $v$ such that the attachment of $v$ contains no edge covering $v$. 
Each unwrapping $W$ of $v$ can be transformed into a canonical unwrapping $W^c$ (with respect to $B$).
Furthermore, the number of vertex moves in $W^c$ is not greater than 
the number of vertex moves in the original unwrapping~$W$.
\end{lemma}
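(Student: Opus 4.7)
The plan is to mirror the proof of Lemma~\ref{lemCanonical} almost verbatim, adapting it to the unwrapping setting. Given an unwrapping $W$ of $v$, apply it to $\delta_G$ to obtain a drawing $\delta^W_G$ in which $C_v$ is crossing-free and no edge of $C_v$ covers $v$. By Observation~\ref{obsAttachmentConsecutive}, the cyclic vertex ordering of $B$ in $\delta^W_G$ agrees with $H(B)$ up to reversal, and the vertices of each attachment of $B$ occupy a consecutive arc on the circle. Let $A_1,\ldots,A_k$ be the attachments of $B$, indexed in the clockwise order in which they appear in $\delta^W_G$, and let $\sigma(A_i)$ denote the clockwise ordering of $A_i$ as it appears in the original drawing $\delta_G$.

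Define the target clockwise ordering $\sigma'_G$ by keeping $u$ and the entire arc of $C_u$ as in $\delta_G$, and replacing the $C_v$-arc of $\delta_G$ by the concatenation $(\sigma(A_1),\ldots,\sigma(A_k))$. Let $\delta'_G$ be the circular drawing whose clockwise order is $\sigma'_G$. I would then verify that $\delta'_G$ is a valid unwrapping of $v$: (i) each attachment is drawn internally as in $\delta_G$, hence without crossings (since $C_v$ is planar in $\delta_G$); (ii) the vertices of $B$ are placed as $H(B)$, so $B$ is planar and, together with the consecutive placement of its attachments, the whole $C_v$ is planar; (iii) no edge of $C_v$ covers $v$: the attachment at $v$ contains no covering edge by hypothesis, edges inside $B$ do not cover $v$ since $B$ is drawn as $H(B)$, and edges inside other attachments lie on consecutive arcs that do not straddle $v$ and $u$.

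Finally, define $W^c$ as a minimum-length sequence of moves transforming $\delta_G$ into $\delta'_G$; computing such a sequence amounts to finding a longest common cyclic subsequence of the two orderings, as in~\cite{DBLP:journals/corr/abs-1208-0396}. To show $|W^c|\le|W|$, I would argue that every common subsequence $S$ of the clockwise orderings $\sigma_G$ of $\delta_G$ and $\sigma^W_G$ of $\delta^W_G$ is also a subsequence of $\sigma'_G$. Within each attachment $A_i$, the restriction $S\cap A_i$ is a subsequence of $\sigma(A_i)$ by definition, since the ordering of $A_i$ in $\sigma_G$ is precisely $\sigma(A_i)$. Across attachments, the $A_i$-blocks of $S$ appear in $S$ in the clockwise order inherited from $\sigma^W_G$, which is exactly $A_1,\ldots,A_k$ by our indexing. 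Concatenating these two observations shows $S$ is a subsequence of $\sigma'_G$, so the longest common subsequence with $\sigma'_G$ is at least as long as with $\sigma^W_G$, and hence $|W^c|\le|W|$.

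The main obstacle is step~(iii): carefully verifying that the rearrangement to $\sigma'_G$ does not reintroduce a covering edge. This is where the hypothesis on the attachment of $v$ is essential, since without it a covering edge could survive in the attachment at $v$ even after canonicalization. The remaining arguments are routine adaptations of the LCS-based move-count reasoning from Lemma~\ref{lemCanonical}.
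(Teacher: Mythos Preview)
Your proposal is correct and follows essentially the same approach as the paper's own proof: build $\sigma'_G$ by keeping the $B$-order (and $C_u$) from $\delta^W_G$ while restoring each attachment $A_i$ to its original order $\sigma(A_i)$, verify that the resulting drawing is a valid unwrapping, and conclude via the common-subsequence argument. In fact you supply more detail than the paper does, particularly in step~(iii) where you explicitly invoke the hypothesis on the attachment at $v$ to rule out covering edges and in spelling out the within-attachment / across-attachment decomposition of the LCS argument; the paper simply asserts both points.
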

\begin{proof}
\new{Given an unwrapping $W$ of $v$, let $\delta^W_G$ be the drawing obtained after applying $W$ on $\delta_G$.
In $\delta^W_G$, the cyclic vertex ordering of $B$ (clockwise or counterclockwise) must correspond to its Hamiltonian cycle ordering $H(B)$. 
Furthermore, the vertices of each attachment of $B$ appear consecutively in $\delta^W_G$, including  one vertex of $B$; see Observation~\ref{obsAttachmentConsecutive}.
Let $A_1,...A_k$ be the attachments of $B$ in $C_v$ (in this clockwise order in $\delta^W_{G}$), let $\sigma(A_i)$ be the clockwise vertex ordering of $A_i$ in $\delta_{G}$ for $i \in \{1\ldots k\}$.
Consider the clockwise vertex ordering $\sigma'_{G}$ where the vertices of $B\cup C_u$ are ordered as in 
$\delta^W_{G}$.
Furthermore, for each attachment $A_i$ the vertices of $A_i$ appear consecutively in the clockwise ordering $\sigma(A_i)$. 
Let $\delta'_G$ be an arbitrary circular drawing where the vertices are ordered as $\sigma'_G$.
Note that the vertex ordering of each attachment of $B$ is $\sigma(A_i)$ in $\delta'_G$ as in the almost-planar drawing $\delta_G$, thus each attachment in $\delta'_G$ is crossing-free. 
Moreover, in $\delta'_G$ the vertices of $B$ are ordered as in the planar drawing $\delta^W_G$, thus there is no crossing inside $B$.  
Overall, the vertex $v$ is unwrapped in $\delta'_{G}$.
It remains to prove that the canonical unwrapping $W^c$, which transforms $\delta_{G}$ to $\delta'_{G}$, moves less than or equally many vertices of $C_v$ as $W$.
By construction each common subsequence of $\delta_{G}$ and $\delta^W_G$ is also a subsequence of $\delta'_{G}$, which implies this fact.}
\end{proof}

By Lemma~\ref{lemUnwrap}, we restrict our attention to canonical unwrappings.
Fixing a 2-connected component $B_v$ of $C_v$ containing $v$ such that no edge in the attachment (with respect to $B_v$) of $v$ covers $v$, we consider the two possible canonical unwrappings of $v$, which respectively order vertices of $B$ clockwise along $H(B)$ or its reversal, and compute the corresponding resulting clockwise vertex ordering $\sigma_v$ and $\sigma^{rev}_v$ of $C_v$.
With the same idea, we get the clockwise vertex orderings $\sigma_u$ and $\sigma^{rev}_u$ of $C_u$ by the canonical unwrappings of $u$. 
We then get the four optimal unwrappings, each of them transforming $\delta_G$ to one of the vertex orderings $(\sigma_v\sigma_u), (\sigma^{rev}_v\sigma_u), (\sigma_v\sigma^{rev}_u)$ and $(\sigma^{rev}_v\sigma^{rev}_u)$. 
Such optimal unwrappings can be computed in $O(n^2)$ time; 
see~\cite{DBLP:journals/corr/abs-1208-0396}.
We report the one that moves the minimum number of vertices as an optimal component-fixed untangling.

\section{Conclusion and Outlook}
We introduced and investigated the problem of untangling non-planar circular drawings.
First from the computational side, we demonstrated the \NP-hardness of the problem \untangling.
Second, we studied the almost-planar circular drawings, where all crossings involve a single edge.  
We gave a tight upper bound of~$\lfloor\frac{n}{2}\rfloor-1$ on the shift number and an~$O(n^2)$-time algorithm to compute it.  Open problems for future work include: (i) The parameterized complexity of computing the circular shifting, e.g., with respect to the number of crossings or the number of connected components.
(ii) Generalization of our results for almost-planar drawings.
(iii) Investigation of minimum untangling by other elementary moves such as swapping vertex pairs or moving larger chunks of vertices.

\bibliographystyle{abbrv} 
\bibliography{untangling.bib}





\end{document}